\def\e#1{\emph{#1}}
\def\val#1{\texttt{#1}}
\def\attname#1{\textsf{#1}}
\def\relname#1{\text{\sc #1}}
\def\dla{\mathrel{\,{:}{-}}\,}
\def\Dc{\widehat D}
\newcommand{\PW}{\mbox{PW}}
\newcommand{\NW}{\mbox{NW}}
\newcommand{\NA}{\mbox{NA}}
\newcommand{\PA}{\mbox{PA}}
\newcommand{\Win}{\relname{Winner}}
\def\tup#1{\mathbf{#1}}
\def\w#1{{\cellcolor[gray]{0.5}\color{white}\textsf{#1}}}
\newenvironment{proofhint}
{\par\noindent\textbf{Hint of Proof:}\,}
{\par}
\def\angs#1{\langle#1\rangle}
\def\necess#1{\mathsf{Necessity}(#1)}
\def\possib#1{\mathsf{Possibility}(#1)}
\newtheorem{definition}{Definition}
\newtheorem{lemma}{Lemma}
\newtheorem{theorem}{Theorem}
\newtheorem{corollary}{Corollary}
\newtheorem{example}{Example}
\newcommand*{\eat}[1]{}
\def\set#1{\{#1\}}
\def\score{\mathrm{score}}
\title{Computational Social Choice Meets Databases\thanks{This is an
    extended version of ``Computational Social Choice Meets
    Databases'' by Kimelfeld, Kolaitis and Stoyanovich, to appear in
    IJCAI 2018.}}
\author{
Benny Kimelfeld$^1$, 
Phokion G. Kolaitis$^2$, 
Julia Stoyanovich$^3$
\\ 
$^1$ Technion, Israel \\
$^2$ UC Santa Cruz and IBM Research-Almaden, USA\\
$^3$ Drexel University, USA  \\
bennyk@cs.technion.ac.il,
kolaitis@cs.ucsc.edu,
stoyanovich@drexel.edu 
}
\begin{document}

\maketitle
% * <kolaitis@ucsc.edu> 2018-01-22T02:09:20.207Z:
%
% ^.
%\input{abstract}
\begin{abstract}
  We develop a novel framework that aims to create bridges between the
  computational social choice and the database management
  communities. This framework enriches the tasks currently supported
  in computational social choice with relational database context,
  thus making it possible to formulate sophisticated queries about
  voting rules, candidates, voters, issues, and positions. At the
  conceptual level, we give rigorous semantics to queries in this
  framework by introducing the notions of necessary answers and
  possible answers to queries. At the technical level, we embark on an
  investigation of the computational complexity of the necessary
  answers. We establish a number of results about the complexity of
  the necessary answers of conjunctive queries involving positional
  scoring rules that contrast sharply with earlier results about the
  complexity of the necessary winners.
  % Furthermore, we exhibit a simple and natural conjunctive query
  % such that computing the necessary answers is coNP-hard for every
  % nontrivial positional scoring rule.  under the plurality rule.
\end{abstract}

\section{Introduction}
\label{sec:intro}

Social choice theory is concerned with the aggregation of preferences
expressed by the members of a society to arrive at a collective
decision.  The origins of social choice theory are often traced to the
work of Jean-Charles de Borda and Marquis de Condorcet in the 18th
Century, even though it is now known that Condorcet's voting rule had
been already proposed by Ramon Llull in the 13th Century
\cite{hagele2001llull}.  During the past two decades, social choice
theory has been examined under the algorithmic lens, and computational
social choice (COMSOC) has emerged as an interdisciplinary research
area that combines insights and methods from mathematics, economics,
logic, and computer science.  The COMSOC community has carried out an
in-depth investigation of computational aspects of voting and
preference aggregation in an election or a poll.  Since preferences
are often only partially expressed, the notions of \emph{necessary
  winners} and \emph{possible winners} were formulated by Konczak and
Lang \cite{konczak2005voting}, as the candidates who win in every
(respectively, in at least one) completion of the given partial
preferences. Subsequent investigations produced a classification of
the computational complexity of the necessary and possible winners for
a variety of voting rules
\cite{DBLP:journals/ipl/BaumeisterR12,DBLP:journals/jcss/BetzlerD10,DBLP:journals/jair/XiaC11}.

Here, we bring forth a novel framework that aims to create bridges
between the COMSOC community and the data management community.  We
enrich the kinds of data analysis tasks that COMSOC methods currently
support by incorporating context about candidates, voters, issues, and
positions, thus going well beyond the mere determination of
winners. To achieve this, we accommodate COMSOC primitives within a
relational database framework, enabling the formulation and evaluation
of sophisticated queries.

\smallskip
\noindent{\bf Motivating Example.}
A \e{preference database} \cite{DBLP:journals/pvldb/JacobKS14} is
depicted in Figure~\ref{fig:example}. The relations $\relname{Cand}$
and $\relname{Voter}$ contain demographic information about political
candidates and voters, while $\relname{Supports}$ and
$\relname{Opposes}$ list positions of candidates on campaign issues,
and $\relname{Ballot}$ records results of an election or a poll.
Observe that $\relname{Ballot}$ specifies preferences of a voter in an
election with pairwise comparisons: the meaning of the tuple
$(\val{Oct-5},\val{Ann};\val{Clinton},\val{Trump})$ is that voter Ann
prefers Clinton to Trump when polled on October 5.

Preferences of voters may be incomplete. In particular, Ann states
that she prefers both Clinton and Johnson to Trump, but does not
specify a relative preference between Clinton and Johnson.  The
incomplete preference relation $\relname{Ballot}$ gives rise to four
completions, $B_1$ through $B_4$, in which each session is associated
with a complete ranking (a total order) over the candidates that is
consistent with the partial preference in $\relname{Ballot}$.

A data analyst may want to aggregate the votes of Ann and Bob to
determine the \e{winner} of the \val{Oct-5} election---the candidate
deemed most desirable by the voters---using a \e{voting rule}. In this
paper, we focus on \e{positional scoring rules}---voting rules that
assign a score to each candidate based on the candidate's position in
a ranking and then sum the scores across all rankings.

Our example involves two voting rules: plurality, which assigns a
score of $1$ to the top candidate in each ranking and $0$ to all other
candidates, and Borda, which assigns a score of $m-r$ to the candidate
at position $r$ out of $m$.  The sum of scores of each candidate in
each completion is shown in the bottom left table in
Figure~\ref{fig:example}.

We are concerned with answering the following kind of question: \e{Is
  there a winner according to the plurality rule in the October 5
  election who is pro-choice?} This question is phrased in logic-rule
style below as:
\begin{align*}
q_1() \dla & \Win(\val{plurality},\val{Oct-5},c), \\
           & \relname{Supports}(c,\val{pro-choice})
\end{align*}

What is the meaning of such a query posed on a database with partial
preferences? In Section~\ref{sec:framework}, we propose formal
semantics that generalizes the concepts of \e{necessary} and
\e{possible winners} from computational social choice to those of
\e{necessary} and \e{possible answers} in a partial preference
database.

Returning to the example in Figure~\ref{fig:example}, if we consider
Borda's rule, Clinton is the only necessary winner: she is the sole
winner in \relname{$B_1$}, \relname{$B_2$} and \relname{$B_3$}, and is
among the winners in \relname{$B_4$}. If we consider the plurality
rule, we find the following winners in each completion:
\relname{$B_1$}: Clinton, \relname{$B_2$}: Clinton and Trump,
\relname{$B_3$}: Clinton and Johnson, \relname{$B_4$}: Johnson and
Trump.  Consequently, the set of necessary winners under plurality is
empty.  Observe that, although no candidate is a necessary winner
under plurality, it is the case that \e{at least one} of the winners
in each completion is pro-choice.  Thus, under the plurality rule, the
query $q_1$ is \e{necessary}, i.e., \e{true} is a necessary answer of
$q_1$.

The preceding example illustrates the difference that context makes
and points to the richness brought by combining social choice and data
management.

\smallskip

\noindent {\bf Contributions.} At the conceptual level, we develop a
framework that combines social choice with database management, thus
making it possible to study social choice problems in the context of
additional information about voters, candidates, and issues. In
particular, we give rigorous semantics to queries in this framework by
introducing the notions of the \e{necessary answers} and the
\e{possible answers}.  At the technical level, we embark on an
investigation of the computational complexity of query evaluation in
this framework. In particular, we establish a number of results about
the necessary answers of queries that stand in sharp contrast to
results about the necessary winners under positional scoring rules.

We begin by exploring the complexity of computing the necessary
answers under the plurality rule. It is well known that there is a
polynomial-time algorithm for computing the necessary winners under
the plurality rule, and in fact, such an algorithm exists for every
pure positional scoring
rule~\cite{konczak2005voting,DBLP:journals/jair/XiaC11}.  We give a
polynomial-time algorithm for computing the necessary answers of
conjunctive queries that involve the plurality rule and have the
property that the winner atoms belong to different connected
components of the query. We prove that for a large class of
conjunctive queries, the property that winner atoms belong to
different connected components is \e{precisely} the property that
distinguishes tractable from intractable queries under the plurality
rule; particularly, evaluating queries in that class with two winner
atoms in the same connected component is coNP-complete.

Going beyond the plurality rule, we show that there is a natural
conjunctive query involving database relations and winners such that
computing the necessary answers is a coNP-complete problem for all
nontrivial positional scoring rules. By \e{trivial} we mean a scoring
rule that assigns the same score to every candidate (hence, every
candidate is a necessary winner). This result subsumes the
corresponding results (Theorems~4 and~5) that we have published in the
abridged version of this paper~\cite{ijcai18paper}.

% under the plurality rule such that computing their necessary answers
% is a coNP-complete problem. Furthermore, we show that this hardness
% result extends to a large class of positional scoring rules.  On the
% side of tractability, we give a polynomial-time algorithm for
% computing the necessary answers of conjunctive queries that involve
% the plurality rule and have the property that the winner atoms
% belong to different connected components of the query.

\begin{figure*}[t!]
\small
\centering
\scalebox{0.9}{
\parbox{1.1\textwidth}{
\centering
\parbox[t]{1.1\textwidth}{
\begin{tabular}[t]{|c| c c c c|}
    \multicolumn{3}{l}{\relname{Cand} }\\
    \hline
    \w{\attname{cand}} & \w{\attname{party}} & \w{\attname{sex}} & \w{\attname{edu}} & \w{\attname{age}} \\
    \hline
    $\val{Clinton}$ & $\val{D}$ & $\val{F}$  & $\val{JD}$ & $\val{70}$\\
    $\val{Johnson}$ & $\val{L}$ & $\val{M}$  & $\val{BS}$ & $\val{64}$\\
    $\val{Trump}$ & $\val{R}$ & $\val{M}$  & $\val{BS}$ & $\val{71}$\\    
    \hline
  \end{tabular}
\quad
\centering
  \begin{tabular}[t]{|c c|}
\multicolumn{2}{l}{\relname{Supports} }\\
    \hline
    \w{\attname{cand}} & \w{\attname{issue}} \\
    \hline
    $\val{Clinton}$ & $\val{gun-control}$   \\
    $\val{Clinton}$ & $\val{pro-choice}$   \\
    $\val{Johnson}$ & $\val{pro-choice}$  \\
    \hline
  \end{tabular}
    \begin{tabular}[t]{|c c|}
\multicolumn{2}{l}{\relname{Opposes} }\\
    \hline
    \w{\attname{cand}} & \w{\attname{issue}} \\
    \hline
    $\val{Johnson}$ & $\val{gun-control}$  \\
    $\val{Trump}$ & $\val{gun-control}$  \\
    $\val{Trump}$ & $\val{pro-choice}$  \\
    \hline
  \end{tabular}
\quad
 \begin{tabular}[t]{|c| c c|}
    \multicolumn{3}{l}{\relname{Voter} }\\
    \hline
    \w{\attname{voter}} & \w{\attname{sex}} & \w{\attname{edu}} \\
    \hline
    $\val{Ann}$ & $\val{F}$ & $\val{MS}$  \\
    $\val{Bob}$ & $\val{M}$ & $\val{BS}$  \\
    \hline 
 \end{tabular}   
}
\vskip1em
{\parbox[t]{0.5\textwidth}{\centering
\begin{tabular}[t]{|c c|c c|}
 \multicolumn{4}{l}{\relname{Ballot}}\\
    \hline
    \w{\attname{election}} & \w{\attname{voter}} & \w{\attname{lcand}} & \w{\attname{rcand}}\\
    \hline
    $\val{Oct-5}$ & $\val{Ann}$ & $\val{Clinton}$ & $\val{Trump}$\\
    % $\val{Oct-5}$ & $\val{Ann}$ & $\val{Clinton}$ & $\val{Johnson}$\\
    $\val{Oct-5}$ & $\val{Ann}$ & $\val{Johnson}$ & $\val{Trump}$\\
    % $\val{Oct-5}$ & $\val{Bob}$ & $\val{Clinton}$ & $\val{Trump}$\\
    $\val{Oct-5}$ & $\val{Bob}$ & $\val{Clinton}$ & $\val{Johnson}$\\
    $\val{Oct-5}$ & $\val{Bob}$ & $\val{Trump}$ & $\val{Johnson}$\\
    \hline
   \end{tabular}
 \vskip2em
\def\plurality{pl\xspace}
\def\Borda{Brd\xspace}
\def\twoapproval{2ap\xspace}
 \begin{tabular}[t]{|c | c c|c c|c c|c c|}
      \multicolumn{1}{c}{} & \multicolumn{2}{c}{\relname{$B_1$}} & \multicolumn{2}{c}{\relname{$B_2$}}
      & \multicolumn{2}{c}{\relname{$B_3$}} & \multicolumn{2}{c}{\relname{$B_4$}} \\
   \hline
   \w{\attname{cand}} & \w{\attname{\plurality}} & \w{\attname{\Borda}} &
     \w{\attname{\plurality}} & \w{\attname{\Borda}} & 
     \w{\attname{\plurality}} & \w{\attname{\Borda}} &
     \w{\attname{\plurality}} & \w{\attname{\Borda}} \\
   \hline
    $\val{Clinton}$ & $\val{2}$ & $\val{4}$ &  $\val{1}$ & $\val{3}$ &  $\val{1}$ & $\val{3}$ &  $\val{0}$ & $\val{2}$ \\
    $\val{Johnson}$ & $\val{0}$ & $\val{1}$ &  $\val{0}$ & $\val{1}$ &  $\val{1}$ & $\val{2}$ &  $\val{1}$ & $\val{2}$ \\
    $\val{Trump}$   & $\val{0}$ & $\val{1}$ &  $\val{1}$ & $\val{2}$ &  $\val{0}$ & $\val{1}$ &  $\val{1}$ & $\val{2}$ \\
   \hline
\end{tabular}
\vskip0.2em
Candidate scores in the possible completions under the plurality (\plurality) and Borda (\Borda) rules %Borda (\Borda) rules
}}
{\parbox[t]{0.5\textwidth}{\centering
\begin{tabular}[t]{|c c|c|}
\multicolumn{3}{l}{A possible completion \relname{$B_1$} of \relname{Ballot}}\\
  \hline
     \w{\attname{election}} & \w{\attname{voter}} & \w{\attname{ranking}} \\
   \hline
    $\val{Oct-5}$ & $\val{Ann}$ & $\val{Clinton} \succ \val{Johnson} \succ \val{Trump}$ \\
    $\val{Oct-5}$ & $\val{Bob}$ & $\val{Clinton} \succ \val{Trump} \succ \val{Johnson}$ \\
   \hline
\end{tabular}
\vskip0.2em
\begin{tabular}[t]{|c c|c|}
\multicolumn{3}{l}{A possible completion \relname{$B_2$} of \relname{Ballot}}\\
  \hline
     \w{\attname{election}} & \w{\attname{voter}} & \w{\attname{ranking}} \\
   \hline
    $\val{Oct-5}$ & $\val{Ann}$ & $\val{Clinton} \succ \val{Johnson} \succ \val{Trump}$ \\
    $\val{Oct-5}$ & $\val{Bob}$ & $\val{Trump} \succ \val{Clinton} \succ \val{Johnson}$ \\
   \hline
\end{tabular}
\vskip0.2em
\begin{tabular}[t]{|c c|c|}
\multicolumn{3}{l}{A possible completion  \relname{$B_3$} of \relname{Ballot}}\\
  \hline
     \w{\attname{election}} & \w{\attname{voter}} & \w{\attname{ranking}} \\
   \hline
    $\val{Oct-5}$ & $\val{Ann}$ & $\val{Johnson} \succ \val{Clinton} \succ \val{Trump}$ \\
    $\val{Oct-5}$ & $\val{Bob}$ & $\val{Clinton} \succ \val{Trump} \succ \val{Johnson}$ \\
   \hline
\end{tabular}
\vskip0.2em
\begin{tabular}[t]{|c c|c|}
\multicolumn{3}{l}{A possible completion \relname{$B_4$} of \relname{Ballot}}\\
  \hline
     \w{\attname{election}} & \w{\attname{voter}} & \w{\attname{ranking}} \\
   \hline
    $\val{Oct-5}$ & $\val{Ann}$ & $\val{Johnson} \succ \val{Clinton} \succ \val{Trump}$ \\
    $\val{Oct-5}$ & $\val{Bob}$ & $\val{Trump\textbf{}} \succ \val{Clinton} \succ \val{Johnson}$ \\
   \hline
\end{tabular}
}}
}}
\caption{An example of a preference database.}
  \label{fig:example}
\end{figure*}

\section{Preliminaries}
\label{sec:prelim}

\paragraph{Relational databases and conjunctive queries.}
A \e{schema} is a collection of \e{relation symbols}, each having an
associated \e{signature}, which is a sequence of attribute names. A
\e{database} instantiates each relation symbol with a corresponding
relation (table).  We will use the database in
Figure~\ref{fig:example} as our running example.

A \e{query} is a function that maps every database into a
relation. More formally, a query has an associated input schema and an
output signature, and it maps every database over the input schema
into a relation over the output signature. If $D$ is a database and
$q$ is a query, then $q(D)$ denotes the relation resulting by
evaluating $q$ on $D$; each tuple in $q(D)$ is referred to as an
\e{answer} to $q$ on $D$.  In this paper, we study \e{conjunctive
  queries}, which correspond to the fragment of first-order logic
obtained from atomic formulas using conjunction and existential
quantification. Conjunctive queries are also known as
select-project-join (SPJ) queries, and are among the most frequently
asked database queries.

We will write queries as logic rules with a body and a head.  For
example, consider the following query:
 $$q(c) \dla  \relname{Voter}(\val{Ann},s,e), \relname{Cand}(c,p,s,a), a>\val{65}$$

 This query computes candidates who are older than 65 and whose sex is
 the same as that of voter \val{Ann}, and will return a single tuple,
 \val{Clinton}, when evaluated over the database in
 Figure~\ref{fig:example}.  Note that \val{Ann} and \val{65} are
 constants in $q$, while $a$, $c$, $e$, $p$, and $s$ are
 variables. The variables $e, p, s$ that occur in the body, but not in
 the head, of the query are existentially quantified.

 A \e{Boolean} query is a query that has no free variables, hence it
 stands for a yes/no (true/false) question about the database. For
 example, the Boolean query
$$q'() \dla  \relname{Voter}(\val{Ann},s,e), \relname{Cand}(c,p,s,a), a>\val{65}$$
asks whether or not there is a candidate who is older than 65 and
whose sex is the same as that of voter \val{Ann}. 

Conjunctive query evaluation has been a central topic of research in
the database management community (see
\cite{DBLP:books/aw/AbiteboulHV95}). In particular, it is well known
that, for every fixed conjunctive query $q$, there is a
polynomial-time algorithm that, given a database $D$, computes $q(D)$.

\paragraph{Incomplete databases and possible worlds.} Various notions
of database incompleteness have been studied in depth for several
decades. Common to these is the notion of \e{possible worlds}: these
include the completions of incomplete databases and the solutions in
data exchange and data
integration~\cite{FKMP05,DBLP:journals/jacm/ImielinskiL84,Lenzerini:2002:DIT:543613.543644}.
Query answering is a central challenge studied in these frameworks,
where the goal is to find the \e{certain} answers, i.e., the answers
obtained on every completion or on every solution.  Additionally, a
\e{possible answer} is an answer that is obtained on at least one
possible world.  More formally, if $\mathbf{W}$ denotes the set of
possible worlds of the database representation at hand, then the set
of certain answers to the query $q$ is the intersection
$\bigcap_{D\in\mathbf{W}}q(D)$, while the set of possible answers to
$q$ is the union $\bigcup_{D\in\mathbf{W}}q(D)$.

\paragraph{Voting profiles and voting rules.} Let
$C=\{c_1,\ldots,c_m\}$ be a set of \emph{candidates} (or
\emph{alternatives}) and let $V=\{v_1,\ldots,v_n\}$ be a set of
voters. A \emph{complete voting profile} is a tuple ${\bf
  T}=(T_1,\ldots,T_n)$, where each $T_i$ is a total order of the set
$C$ of candidates representing the ranking (preference) of voter $v_j$
on the candidates in $C$.

Positional scoring rules constitute a large and extensively studied
class of voting rules.  Each positional scoring rule on a set of $m$
candidates is specified by a scoring vector ${\bf a}=(a_1,\ldots,a_m)$
of non-negative integers, called the \emph{score values}, such that
$a_1\geq a_2\geq \ldots \geq a_m$. To avoid trivialities, we assume
that there are at least two different score values. Suppose that ${\bf
  T}=(T_1,\ldots,T_n)$ is a total voting profile. The score $s(T_i,c)$
of a candidate $c$ on $T_i$ is the value $a_s$ where $s$ is the
position of candidate $c$ in $T_i$.  When the positional scoring rule
$r$ is applied to ${\bf T}=(T_1,\ldots,T_n)$, it assigns to each
candidate $c$ the sum $\sum_{i=1}^ns(T_i,c)$ as the \e{score} of $c$.
The set $\mbox{W}(r, {\bf T})$ of the winners consists of the
candidates who achieved maximum score.

From now on, we focus on positional scoring rules that are defined for
every number $m$ of candidates. Thus, a \emph{positional scoring rule}
is an infinite sequence ${\bf a}_1, {\bf a}_2, \ldots, {\bf a}_m,
\ldots$ of scoring vectors such that each ${\bf a}_m$ is a scoring
vector of length $m$. Alternatively, a positional scoring rule is a
function $r$ that takes as argument a pair $(m,s)$ of positive
integers with $s\leq m$ and returns as value a non-negative integer
$r(m,s)$ such that $r(m,1) \geq r(m,2) \ldots \geq r(m,m)$. We will
also assume that the function $r$ is computable in polynomial time.
This implies that the winners can be computed in polynomial time.  

As examples, the \emph{plurality} rule is given by the infinite
sequence of scoring vectors of the form $(1,0,\dots,0)$, the
\emph{$2$-approval} rule is given by the infinite sequence of scoring
vectors of the form $(1,1,0,\dots,0)$, and the \emph{Borda} rule is
given by the infinite sequence of scoring vectors of the form
$(m-1,m-2,\dots,0)$.

Much on the literature makes the assumption that the rules are also
\emph{pure}, which means that the scoring vector ${\bf a}_{m+1}$ of
length $(m+1)$ is obtained from the scoring vector ${\bf a}_m$ of
length $m$ by inserting a score in some position of the scoring vector
${\bf a}_ m$, provided that the decreasing order of score values is
maintained. The plurality, $2$-approval and Borda rules are all pure.

\paragraph{Partial orders.} 
A \e{preference} over a collection of items is a linear order that
ranks the items from the most to the least preferred. Often, our
knowledge about the preference is only partial.  Missing information
in preferences is commonly modeled using a \e{partial order}, that is,
a relation that is reflexive, transitive, and antisymmetric, but not
necessarily total.  A \emph{completion} of a partial order is a total
order that extends that partial order. A partial order may have
exponentially many completions.

\paragraph{Partial voting profiles, necessary and possible winners.} A
\emph{partial voting profile} is a tuple ${\bf P}= (P_1,\ldots,P_n)$,
where each $P_i$ is a partial order of the set $C$ of candidates
representing the partial ranking (partial preference) of voter $v_j$
on the candidates.  A \emph{completion} of a partial voting profile
${\bf P}= (P_1,\ldots,P_n)$ is a complete voting profile ${\bf T}=
(T_1,\ldots,T_n)$ such that each $T_i$ is a completion of the partial
order $P_i$.  The notions of \emph{necessary} and \emph{possible}
winners were introduced by Konczak and Lang \cite{konczak2005voting}.

Let $r$ be a voting rule and ${\bf P}$ a partial voting profile.  The
set $\NW(r,{\bf P})$ of the \emph{necessary winners} with respect to
$r$ and $\bf P$ is the intersection of the sets $\W(r,{\bf T})$, where
$\bf T$ varies over all completions of $\bf P$.  In other words, a
candidate $c$ is a \emph{necessary winner} with respect to $r$ and
$P$, if $c$ is a winner in $\W(r,{\bf T})$ for every completion $\bf
T$ of $\bf P$.

The set $\PW(r,{\bf P})$ of the \emph{possible winners} with respect
to $r$ and $\bf P$ is the union of the sets $\W(r,{\bf T})$, where
$\bf T$ varies over all completions of $\bf P$.  In other words, a
candidate $c$ is a \emph{possible winner} with respect to $r$ and $\bf
P$, if $c$ is a winner in $\W(r,{\bf T})$ for at least one completion
$\bf T$ of $\bf P$.

On the face of the definitions, computing necessary and possible
winners requires exponential time, since, in general, a partial order
may have exponentially many completions.  There is a substantial body
of research on the computational complexity of the necessary and the
possible winners for a variety of voting rules. The following
\emph{complete} classification of the complexity of the necessary and
the possible winners for \emph{all} pure positional scoring rules was
obtained through the work of Konczak and
Lang~\cite{konczak2005voting}, Xia and
Conitzer~\cite{DBLP:journals/jair/XiaC11}, Betzler and
Dorn~\cite{DBLP:journals/jcss/BetzlerD10}, and Baumeister and
Rothe~\cite{DBLP:journals/ipl/BaumeisterR12}.

\begin{theorem} \label{class-thm} {\rm[Classification Theorem]}
The following hold.
\begin{itemize}
\item
  For every pure positional scoring rule $r$, there is a
  polynomial-time algorithm for computing the set $\NW(r,{\bf P})$ of
  necessary winners, given a partial voting profile $\bf P$.
\item  
  If $r$ is the plurality rule or $r$ is the veto rule, then there is
  a polynomial-time algorithm for computing the set $\PW(r,{\bf P})$
  of possible winners, given a partial voting profile $\bf P$.  For
  all other pure positional scoring rules, the following problem is
  NP-complete: given a partial voting profile $\bf P$ and a candidate
  $c$, is $c$ a possible winner with respect to $r$ and $\bf P$?
\end{itemize}
\end{theorem}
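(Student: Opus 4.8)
The plan is to establish the three claims bundled in the statement with largely independent techniques, reflecting that this classification is assembled from several sources.

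For the tractability of the necessary winners under any pure rule $r$, I would reduce the problem to pairwise comparisons: a candidate $c$ fails to be a necessary winner precisely when some other candidate $d$ can be made to strictly outscore $c$ in at least one completion. Since the score of each candidate is a sum of per-voter contributions, the quantity $\mathrm{score}(d)-\mathrm{score}(c)$ decomposes additively across voters, so it suffices to maximize, independently for each voter $v_i$, the per-voter difference $s(T_i,d)-s(T_i,c)$ over all completions $T_i$ of $P_i$. This per-voter optimization is a small combinatorial problem in which we push $d$ as high and $c$ as low as the partial order $P_i$ allows; the only interaction arises when $P_i$ already forces $d$ below $c$, which is handled by a short case analysis. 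Because the scoring vector is monotone and the rule is pure, the optimal placement can be found in polynomial time by trying the feasible target positions for $d$ and $c$ and checking realizability of the remaining candidates via a topological-sort/matching argument. Summing the per-voter optima and comparing with the best score achievable by $c$ decides whether $d$ can overtake $c$; iterating over all pairs $(c,d)$ yields a polynomial-time algorithm for $\NW(r,\mathbf{P})$.

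For the tractable cases of the possible winners, consider plurality. Here $c$ is a possible winner iff some completion makes $c$ tie or beat every competitor; since plurality rewards only the top position, we place $c$ first in every voter for which this is consistent with $P_i$ (no candidate forced above $c$), and in the remaining voters we must assign the top slot to an admissible other candidate. The feasibility question---whether we can distribute these forced top slots so that no competitor accumulates more points than $c$---is naturally cast as a bipartite matching / network-flow problem and is therefore solvable in polynomial time. The veto rule is handled symmetrically, working with the bottom position in place of the top. For every other pure rule, NP membership of the possible-winner decision problem is immediate (guess a completion and verify the winner set $\mbox{W}(r,\mathbf{T})$ in polynomial time, using that $r$ is polynomial-time computable), so the substance is hardness, obtained by reduction from a suitable NP-hard problem such as Exact Cover by $3$-Sets or a numerical matching/partition variant. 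Set elements and sets are encoded as candidates and voters, with partial orders whose completions correspond to choices in the covering instance, arranged so that $c$ is a possible winner exactly when a valid cover exists.

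The hard part will be uniformity: the reduction must work simultaneously for the entire infinite family of pure scoring rules that are neither plurality nor veto, despite their widely varying scoring vectors. The key is to isolate a structural feature shared by all of them---the presence of a strict decrease of the scoring vector at an interior position (equivalently, the vector is neither of the plurality shape $(1,0,\dots,0)$ nor the veto shape $(1,\dots,1,0)$)---and to parameterize the gadgets by the location of such a gap, so that moving $c$ across that gap produces a controllable score swing matched to the combinatorial choices of the source instance. Purity is exactly what lets a single family of gadgets scale consistently from $m$ to $m+1$ candidates. Managing the case analysis on the shape of the scoring vector, and verifying that the induced score differences align precisely with the source instance, is where the bulk of the technical effort lies.
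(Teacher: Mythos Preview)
The paper does not prove Theorem~\ref{class-thm}. It is stated in the Preliminaries as a known result, explicitly attributed to Konczak and Lang, Xia and Conitzer, Betzler and Dorn, and Baumeister and Rothe; no proof or sketch appears in the paper. There is therefore nothing in the paper against which to compare your proposal.

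That said, your outline tracks the standard literature reasonably well: the pairwise ``can $d$ overtake $c$?'' argument with per-voter score-difference maximization is essentially the Xia--Conitzer approach to $\NW$; the matching/flow treatment of plurality and veto possible winners is also the known route. For the hardness side, your intuition about exploiting an interior strict drop in the scoring vector is in the right spirit, but the actual dichotomy boundary is more delicate than ``neither plurality-shaped nor veto-shaped'': Betzler and Dorn handled most cases, and Baumeister and Rothe closed the remaining gap (the rules whose vectors look like $(2,1,\dots,1,0)$) with a separate reduction. A single parameterized gadget family does not quite suffice; the published proofs proceed by a case analysis on the shape of the scoring vectors that is finer than what you sketch.
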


\section{A Relational Framework for COMSOC}
\label{sec:framework}

\paragraph{Preference schemas and sessions.} 
We adopt and refine the formalism for preference databases proposed in
\cite{DBLP:journals/pvldb/JacobKS14} and explored further in
\cite{DBLP:conf/pods/KenigKPS17}. A \e{preference} schema consists of
\e{preference} relation symbols and \e{ordinary} relation symbols.
The attribute list of each preference symbol $P$ is of the form
$(\beta; A_l, A_r)$, where $\beta$ is a list of attributes called
\emph{session signature}, and $A_l,A_r$ are attributes with candidates
as values. The intent is that if $({\bf b};c,d)$ is a tuple in an
instance of the preference symbol $P$, then candidate $c$ is preferred
to candidate $d$ in the session determined by $\bf b$.  In what
follows, we will assume that each session signature $\beta$ consists
of the attributes \attname{election} and \attname{voter}. The
intuition is that values for \attname{election} stand for particular
events, such as an election (e.g., the election for city council
members) or a poll (e.g., a poll on presidential candidates taken on
October 5), while the values for \attname{voter} range over the
possible voters.

\paragraph{Incorporating voting rules and winners.} We augment the
preference schema at hand with a new ternary relation symbol
$\Win$. The signature of $\Win$ is the triple
$(\attname{rule},\attname{election}, \attname{candidate})$. The intent
is that $\attname{rule}$ has voting rules as values, while
\attname{election} and \attname{candidate} have elections (or polls)
and candidates as values, respectively.  The new relation symbol
$\Win$ can now be used in standard relational database queries (e.g.,
SQL queries). To illustrate this, we return to our running example
(the preference database in Fig.~\ref{fig:example}) and give examples
of Boolean queries over this database.

\begin{itemize}
\item Query $q_1$: Is there a winner according to the plurality rule
  in the October 5 election who is pro-choice?
\begin{align*}
q_1() \dla & \Win(\val{plurality},\val{Oct-5},c), \\
           & \relname{Supports}(c,\val{pro-choice})
\end{align*}

\item Query $q_2$: 
  Are there a winner according to the Borda rule and a winner
  according to the 2-approval rule such that the former supports gun
  control and the latter opposes it?
\begin{align*}
  q_2() \dla  & \Win(\val{Borda},\val{Oct-5},c),\\
              & \Win(\val{2-approval},\val{Oct-5},d), \\
              & \relname{Supports}(c,\val{gun-control}), \\
              & \relname{Opposes}(d,\val{gun-control})
\end{align*}
\end{itemize}

\paragraph{Necessary and possible answers.} What is the semantics of
queries, such as the three preceding ones, in a framework that allows
for partial preferences, voting rules, and winners?  We propose two
different semantics of queries over partial preference databases,
namely, the \emph{necessary answers} and the \emph{possible answers}.

Let $D$ be a partial preference database. Then $D$ gives rise to a
partial voting profile ${\bf P}(D)$ consisting of the partial orders
that correspond to the identifiers of the session signature. For
example, the pair $(\val{Oct-5},\val{Ann})$ gives rise to the partial
order that expresses the (partial) preferences of Ann in the October 5
election (or poll).

Let $\Dc$ be a total preference database. We say that $\Dc$ is a
\emph{completion of $D$} if $\Dc$ is obtained from $D$ by completing
all partial preferences into total ones.  
Thus, $\Dc$ is a completion of $D$ iff $\Dc$ and $D$ agree on
  the ordinary relation symbols  and, for each preference
  relation symbol, the total voting profile ${\bf T}(\Dc)$ arising from
  $\Dc$ is a completion of the partial voting profile ${\bf P}(D)$
  arising from $D$. 
  
  \begin{definition} \label{NA-defn}{\rm
Let $q$ be a database query over the preference schema  augmented with relation symbol $\Win$.
\begin{itemize}
\item 
The \emph{necessary answers} of $q$ on $D$, denoted 
  $\NA(q,D)$, is the intersection $\bigcap q(\Dc)$ of the answers 
$q(\Dc)$ of $q$ on $\Dc$, as $\Dc$ ranges over all completions of $D$. 
\item
The \emph{possible answers} of $q$ on $D$, denoted
$\PA(q,D)$, is the union $\bigcup q(\Dc)$ of the answers $q(\Dc)$ of
$q$ on $\Dc$, as $\Dc$ ranges over all completions of $D$.
\end{itemize}
}
\end{definition}

If $q$ contains an atom of the form $\Win(r,e,c)$, then this atom is
evaluated on $\Dc$ by applying the voting rule $r$ on the total voting
profile ${\bf T}(\Dc)$. Thus, $\Win(r,e,c)$ evaluates to \emph{true}
on $\Dc$ if and only if $c$ belongs to the set $\W(r,{\bf T}(\Dc))$ of
the winners according to rule $r$ and the total voting profile ${\bf
  T}(\Dc)$. In this evaluation, only the part of the voting profile
${\bf T}(\Dc)$ that is associated with the election $e$ is needed.

Clearly, the preceding notions of necessary and possible answers
coincide with those of certain and possible answers in the framework
of incomplete databases, where the set $\cal W$ of possible worlds is
the set of all completions $\Dc$ of a given partial preference
database $D$.

If $q$ is a Boolean query, we say that $q$ is \e{necessary} \e{on $D$}
if $q(\Dc)$ is true for every completion $\Dc$, and \e{possible} \e{on
  $D$} if $q(\Dc)$ is true for at least one completion $\Dc$.  We
denote by $\necess q$ the decision problem: given $D$, is $q$
necessary on $D$? Similarly, we denote by $\possib q$ the decision
problem: given $D$, is $q$ possible on $D$?

We conclude this section by pointing out that our framework is
different from other approaches that have explored the interaction
between databases and social choice. For example,
Konczak~\cite{DBLP:conf/wlp/Konczak06} investigated the computation of
necessary and possible winners via logic programming (however, that
work does not involve the concepts of necessary answers and possible
answers of queries considered here).  In a different direction,
Lukasiewicz et al.~\cite{DBLP:journals/toit/LukasiewiczMST14}
investigated top-$k$ queries in databases using rankings whose
computation involve the aggregation of partial preferences.

\section{Complexity of Necessary Answers}
\label{sec:complexity}
How difficult is it to compute the necessary answers and the possible
answers of a fixed conjunctive query $q$, given a partial preference
database $D$?  As regards upper bounds, we can consider every
candidate tuple of values from the domain of $D$, and test whether
this tuple is indeed a necessary or possible answer. If the voting
rules occurring in $q$ are such that their winners are computable in
polynomial time, then, for every completion $\Dc$ of $D$, we have that
$q(\Dc)$ can be evaluated in polynomial time in the size of
$\Dc$. Consequently, deciding whether a tuple is a necessary answer of
$q$ is in coNP, while deciding whether it is possible is in NP.

Here, we will investigate the computational complexity of
the necessary answers 
of conjunctive queries. 
We begin by considering several  motivating examples.

\begin{example} \label{exam-1} {\rm Assume that $q$ is an atomic query
    of the form $\Win(r,e,c)$.  Computing the necessary and the
    possible answers of $q$ is the same as computing the necessary and
    the possible winners according to rule $r$. Thus, if $r$ is a pure
    positional scoring rule, then this query is accounted for by the
    Classification Theorem~\ref{class-thm} discussed earlier.}
\qed\end{example}
 
 \begin{example} \label{exam-2}{\rm 
 Let $q$ be the query $q_1$ encountered earlier:
\begin{align*}
q_1() \dla & \Win(\val{plurality},\val{Oct-5},c), \\
           & \relname{Supports}(c,\val{pro-choice})
\end{align*}
A moment's reflection reveals that there is a difference between 
the problem of deciding whether $q$ is possible and the problem of
deciding whether $q$ is necessary.

Indeed, to 
determine whether $q_1$ is possible 
on a partial preference database $D$, it is enough to compute the set
$\PW(\val{plurality},{\bf P}(D))$ of the possible winners with respect
to the plurality rule and the partial voting profile ${\bf P}(D)$,
and then intersect this set with the set of the pro-choice
candidates. Since the possible winners with respect to the plurality
rule are computable in polynomial time, it follows that 
$\possib{q_1}$ is decidable in polynomial time.

In other words, the
 possibility of $q_1$ 
can be rewritten to a
query that involves the possible winners $\PW(\val{plurality},{\bf
  T}(D))$. Concretely, 
$\possib{q_1}$ is equivalent to the query
\begin{align*}
q_1'() \dla & (c~\mbox{\tt IN}~\PW(\val{plurality},{\bf T}(D))),\\
           & \relname{Supports}(c,\val{pro-choice})\,.
\end{align*}

In contrast, 
$\necess{q_1}$ cannot
be rewritten (at least in a straightforward way) to a computation
involving the necessary winners $\NW(\val{plurality},{\bf T}(D))$. In
particular, 
$\necess{q_1}$
is \emph{not}
equivalent to the query
\begin{align*}
q_1''() \dla & (c~\mbox{\tt IN}~\NW(\val{plurality},{\bf T}(D))),\\
           & \relname{Supports}(c,\val{pro-choice})\,.
\end{align*}

Indeed,   for every completion $\Dc$
of $D$, there may exist a pro-choice winner (thus, $q_1$ is necessary on $D$),
but there may be
different winners  for different completions and, as a result,
there may exist no necessary winner who is also pro-choice (thus, the
query $q_1''$ evaluates to \emph{false} on $D$). 

Our results, however, will imply that $\necess{q_1}$ is decidable in
PTIME.  }
\qed\end{example}

\begin{example} \label{exam-4}{\rm 
Assume that $q$ is the query $q_3$: Are there two winners with different positions on at least one issue?
\begin{align*}
q_3() \dla &\Win(\val{plurality},e,c),\relname{Supports}(c,i),\\
          &\Win(\val{plurality},e,d),           
           \relname{Opposes}(d,i)
\end{align*}
Note that the query $q_3$ involves two $\Win$ atoms and that 
the candidates $c$ and $d$ occurring in these atoms are linked via the
variable $i$ in the two ordinary atoms of $q_3$.     

Our results will imply that $\necess{q_3}$ is coNP-complete. Thus,
Examples \ref{exam-2} and \ref{exam-4} demonstrate that, when bringing
together preferences, voting rules, and relational data in a unifying
framework, we are in a new state of affairs in which methods and
results from computational social choice need not apply directly.  }
\qed\end{example}

\subsection{Complexity Results for the Plurality Rule}
We present several complexity results for Boolean conjunctive queries.
In what follows in this subsection, we assume that all queries
considered involve the plurality rule and some fixed election, which
appears as a constant value $\val{elec}$ in the queries. Hence, each
$\Win$ atom has at most one variable, which stands for a winning
candidate; we refer to such variable as a \e{winner} variable.

\subsubsection{Tractability Results}
We begin with a tractability result.
\begin{theorem} \label{single-atom-thm} If $q$ is a Boolean
  conjunctive query consisting of a single $\Win$ atom and ordinary
  atoms, then $\necess q$ is decidable in polynomial time.
\end{theorem}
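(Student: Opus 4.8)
The plan is to reduce $\necess q$ to a tractable combinatorial question about the plurality tallies realizable across completions. Write $q$ as $\exists c\,\exists\bar y\,[\Win(\val{plurality},\val{elec},c)\wedge\psi(c,\bar y)]$, where $\psi$ is the conjunction of the ordinary atoms. Since the ordinary relations are identical in every completion $\Dc$ of $D$, the set
$$S = \{\,c : \exists\bar y\,\psi(c,\bar y)\ \text{holds in } D\,\}$$
of candidates satisfying the ordinary part can be computed once, in polynomial time, by ordinary conjunctive-query evaluation. For every completion $\Dc$ we then have that $q(\Dc)$ holds iff some plurality winner of $\Dc$ lies in $S$. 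Hence $q$ is necessary iff \emph{every} completion has a plurality winner in $S$, and $q$ fails to be necessary iff \emph{some} completion has all of its winners outside $S$.

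The next step is to describe the realizable tallies. Under plurality only the top-ranked candidate of each voter matters, and a candidate can be placed on top of a completion of $P_v$ precisely when it is a \emph{maximal} element of $P_v$ (nothing ranked above it). Let $\mathrm{Max}(P_v)$ denote this set; placing a chosen maximal element first and extending the rest shows that, independently across voters, a completion may award voter $v$'s single point to any candidate in $\mathrm{Max}(P_v)$. Thus a completion has all winners outside $S$ exactly when the points can be assigned so that some candidate $d\notin S$ ends up with strictly larger score than every candidate in $S$.

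For a fixed $d\notin S$ this optimization decouples into two independent parts, because the relevant groups of voters are disjoint. To maximize $d$'s score we award $d$ the point of every voter with $d\in\mathrm{Max}(P_v)$, giving the attainable value $D_d=|\{v : d\in\mathrm{Max}(P_v)\}|$. To keep the scores inside $S$ low, observe that a voter is \emph{forced} to contribute to $S$ only when $\mathrm{Max}(P_v)\subseteq S$ (every other voter can be steered to a candidate outside $S$, affecting neither $d$ nor $S$); distributing these forced unit contributions so as to minimize the maximum score inside $S$ is a restricted-assignment makespan problem with unit jobs, solvable in polynomial time by bipartite max-flow (for each threshold $t$, test whether all forced jobs route with every $S$-candidate receiving at most $t$). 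Let $L$ be this minimum. Since forced voters have $d\notin\mathrm{Max}(P_v)$, they are disjoint from the $d$-voters, so $D_d$ and $L$ are simultaneously attainable, and a completion with all winners outside $S$ exists iff $D_d>L$ for some $d\notin S$.

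This yields the criterion: $q$ is necessary iff $\max_{d\notin S} D_d \le L$, which is decidable in polynomial time. The step I expect to require the most care is the converse direction — showing that $D_d\le L$ for all $d\notin S$ forces \emph{every} completion to have a winner in $S$. The key observation is monotone: in an arbitrary completion each $d\notin S$ scores at most $D_d$, while the forced contributions alone make some $S$-candidate score at least $L$ (any distribution of the forced jobs has makespan $\ge L$, and optional contributions only raise $S$-scores). Hence no $d\notin S$ can strictly beat all of $S$, so a winner always lies in $S$. The degenerate cases are handled uniformly by the same criterion ($S=\emptyset$ gives ``not necessary'' since no forced jobs but no $S$-machines force $L=0<D_d$; $S$ equal to all candidates gives ``necessary'' vacuously), and a constant in the winner position is the special case $|S|\le 1$.
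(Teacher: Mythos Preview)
Your proof is correct and follows essentially the same route as the paper: both reduce $\necess q$ to the \emph{necessary intersection} problem (does every completion have a plurality winner in the set $S$ determined by the ordinary atoms), observe that only the maximal elements of each partial order matter, separate the voters into those forced into $S$ and those that can be steered outside, and solve the resulting load-balancing question by bipartite matching/flow. The only difference is cosmetic: you compute the full minimum makespan $L$ of the forced voters and compare it to $\max_{d\notin S} D_d$, whereas the paper first fixes $m_q=\max_{d\notin S} D_d$ and then checks, via a single matching instance with capacities $m_q-1$, whether the forced voters can all be absorbed below that threshold; these are the same test.
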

%\begin{proofhint}
\begin{omitframe}
  We reduce the problem to the following generalization of the
  necessary-winner problem:  Given a partial voting profile and a
  subset $X$ of candidates, determine whether $X$ overlaps with the set
  of winners in every completion. In turn, we reduce this problem to
  maximum matching in a bipartite graph.  $\qedsymbol$
\end{omitframe}
%\end{proofhint}

%\subsubsection{Proof of Theorem~\ref{single-atom-thm}}
Next, we prove Theorem~\ref{single-atom-thm}. Let $q$ be as in the
theorem. For a voting rule $r$, the evaluation of $q$ reduces to the
following problem that we refer to as \e{necessary intersection}:
Given a set $C$ of candidates, a partial voting profile $\tup P$ and a
party $A\subseteq C$, is $A$ represented in the winners in every
completion of $\tup P$? In other words, is it true that $\W(r,\tup
T)\cap A\neq\emptyset$ for all completions $\tup T$ of $\tup P$?

In the remainder of this proof, we present an algorithm for necessary
intersection in the case where $r$ is the plurality rule.

We assume that $C=\set{c_1,\dots,c_m}$ and $\tup P=(P_1,\dots,P_n)$.
For every $P_i$, we denote by $\max(P_i)$ the set of candidates $c$
such that $P_i$ does not prefer any other candidate over $c$. For each
candidate $c_j$, let $m_j$ be the number of partial profiles $P_i$
such that $c_j\in\max(P_i)$. The following lemma is straightforward.
\begin{lemma}\label{lemma:max-score}
  For all $j$ it holds that $m_j$ is the maximal score that $c_j$ can
  obtain over all possible completions of $\tup P$.
\end{lemma}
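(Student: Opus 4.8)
The plan is to prove the equality in two directions, using the basic observation that under the plurality rule the score of $c_j$ in a completion $\tup T=(T_1,\dots,T_n)$ equals the number of voters whose completed ranking $T_i$ places $c_j$ in the top position (the plurality scoring vector being $(1,0,\dots,0)$).

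First I would establish the upper bound. Fix any completion $\tup T$ of $\tup P$. If $c_j$ is ranked first in $T_i$, then no candidate is preferred to $c_j$ in $T_i$; since $T_i$ extends $P_i$, no candidate is preferred to $c_j$ in $P_i$ either, so $c_j\in\max(P_i)$. Hence the set of voters contributing a point to $c_j$ is contained in $\{i : c_j\in\max(P_i)\}$, whose cardinality is exactly $m_j$, and therefore the plurality score of $c_j$ under $\tup T$ is at most $m_j$. This holds for every completion, so $m_j$ bounds the attainable score from above.

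Second I would prove achievability by exhibiting one completion in which $c_j$ attains score $m_j$. The key structural point is that the individual voters are completed \emph{independently}: a completion of $\tup P$ is any tuple $(T_1,\dots,T_n)$ with each $T_i$ a linear extension of $P_i$. So for each voter $i$ with $c_j\in\max(P_i)$ I would choose a linear extension $T_i$ placing $c_j$ first, and for every other voter I would take an arbitrary completion. The resulting profile ranks $c_j$ first in precisely the $m_j$ voters with $c_j\in\max(P_i)$, giving $c_j$ a plurality score of $m_j$ and matching the upper bound.

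The only step that genuinely requires an argument is the sub-claim that a maximal element of a partial order can be placed at the top of some linear extension. I would prove it by deleting $c_j$, taking any linear extension of the restriction of $P_i$ to $C\setminus\{c_j\}$ (which exists because every finite partial order has a linear extension), and prepending $c_j$ as the top element; since $c_j\in\max(P_i)$ means no candidate is preferred to $c_j$ in $P_i$, ranking $c_j$ above all others is consistent with $P_i$, so the result is a valid completion of $P_i$ with $c_j$ first. This sub-claim is the main obstacle, but it is entirely standard, which is why the lemma is stated as straightforward.
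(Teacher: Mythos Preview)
Your proof is correct and is exactly the elementary argument the paper has in mind; the paper does not spell out a proof and simply calls the lemma ``straightforward,'' so your two-direction argument (upper bound via $T_i$ extending $P_i$, achievability via placing a maximal $c_j$ at the top of each relevant linear extension) is the natural elaboration of that remark.
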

We select $c_q\in C\setminus A$ as a candidate outside of $A$ such
that $m_q$ is maximal among all candidates in $C\setminus A$.  A key
lemma is the following.
\begin{lemma}\label{lemma:A-max-score}
  For all completions $\tup U$ of $\tup P$ there exists a completion
  $\tup U'$ of $\tup P$ such that $\score(\tup U',c_q)=m_q$ and
  $\score(\tup U',a)\leq\score(\tup U,a)$ for all $a\in A$.
\end{lemma}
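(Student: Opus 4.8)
The plan is to produce $\tup U'$ by a direct surgery on $\tup U$: force $c_q$ into the top position of every voter whose partial order permits it, and leave every other voter's ranking exactly as in $\tup U$. The guiding observation is that under plurality the score of a candidate is simply the number of voters who rank that candidate first, so the only feature of each completion that influences any score is its top element.

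Concretely, I would first partition the voter indices into $I=\set{i : c_q\in\max(P_i)}$ and its complement. For each $i\in I$ I would replace $U_i$ by any linear extension $U'_i$ of $P_i$ that puts $c_q$ first; this is a legal completion precisely because $c_q\in\max(P_i)$ means no candidate is preferred to $c_q$ in $P_i$, so $c_q$ may be placed on top. For each $i\notin I$ I would simply set $U'_i=U_i$. This yields a completion $\tup U'$ of $\tup P$.

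I would then verify the two claimed inequalities. For $c_q$: it is first in $U'_i$ for exactly the indices $i\in I$ and, since $c_q\notin\max(P_i)$ for $i\notin I$, it is never first in the unchanged orders $U'_i=U_i$ there; hence $\score(\tup U',c_q)=|I|=m_q$ by Lemma~\ref{lemma:max-score}. For $a\in A$ (noting $a\neq c_q$, since $c_q\in C\setminus A$): on the indices $i\notin I$ the top element of $U'_i$ coincides with that of $U_i$, so $a$'s contribution is identical; on the indices $i\in I$ the top slot now belongs to $c_q$, so $a$ contributes $0$ there, whereas in $\tup U$ its contribution was nonnegative. Summing over all voters gives $\score(\tup U',a)\le\score(\tup U,a)$.

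The step I expect to be the main obstacle is exactly this last monotonicity check, and it is where the argument is tied to plurality: redirecting a voter's first-place vote to $c_q$ can only \emph{remove} a first-place vote from a member of $A$, never create one, because plurality rewards only the top position. For rules that also reward lower ranks, promoting $c_q$ could raise an $A$-candidate's score by shuffling it upward, so the same one-sided construction would fail; this is the sense in which the lemma is genuinely specific to plurality.
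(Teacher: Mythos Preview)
Your proof is correct and follows essentially the same approach as the paper: promote $c_q$ to the top of every voter's order whenever $c_q\in\max(P_i)$, then observe that under plurality this can only remove first-place votes from members of $A$. The paper's version is terser (it simply moves $c_q$ to the head of $U_i$ rather than picking an arbitrary linear extension with $c_q$ on top), but the idea and the verification are the same.
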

\begin{proof}
  Let $\tup U$ be a completion of $\tup P$. Transform $\tup U$ into $\tup
  U'$ by promoting $c_q$ to the head of the list whenever
  $c_q\in\max(P_i)$. Then the score of $c_q$ in $\tup U'$ is
  $m_q$. Moreover, by transforming $\tup U$ to $\tup U'$ no candidate
  $a\in A$ got any new point of score.
\end{proof}

We then conclude the following lemma.
\begin{lemma}\label{lemma:less-than-mq}
The following are equivalent.
\begin{enumerate}
\item There is a completion of $\tup P$ such that no party member 
  is a winner.
\item There is a completion $\tup T$ of $\tup P$ such that
  $\score(\tup T,a)<m_q$ for all $a\in A$.
\end{enumerate}
\end{lemma}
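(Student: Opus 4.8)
The plan is to establish the two implications separately, using the two preceding lemmas as the only real machinery. Throughout I rely on the fact that under plurality a candidate is a winner in a completion exactly when it attains the maximum score in that completion, and on Lemma~\ref{lemma:max-score}, which guarantees that the score of any candidate $c_j$ in any completion never exceeds $m_j$.

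For the direction $(2)\Rightarrow(1)$, I would take a completion $\tup T$ witnessing (2), so that $\score(\tup T,a)<m_q$ for every $a\in A$, and feed it into Lemma~\ref{lemma:A-max-score}. This yields a completion $\tup T'$ with $\score(\tup T',c_q)=m_q$ and $\score(\tup T',a)\le\score(\tup T,a)$ for all $a\in A$. Since $c_q\notin A$ already scores $m_q$ in $\tup T'$, the maximum score in $\tup T'$ is at least $m_q$, while every party member satisfies $\score(\tup T',a)\le\score(\tup T,a)<m_q$. Hence no member of $A$ reaches the maximum, so $\tup T'$ is a completion with no party winner, which is exactly (1). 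The content of this direction is carried entirely by Lemma~\ref{lemma:A-max-score}: it is what lets me push $c_q$ up to its ceiling $m_q$ without handing any points to $A$.

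For the direction $(1)\Rightarrow(2)$, I would argue that no new completion is needed --- a completion witnessing (1) already witnesses (2). Let $\tup U$ be a completion in which no party member wins, and let $W$ be the maximum score attained in $\tup U$. Every winner of $\tup U$ then lies in $C\setminus A$; fixing one such winner $c$, Lemma~\ref{lemma:max-score} gives $W=\score(\tup U,c)\le m_c$, and the choice of $c_q$ as the element of $C\setminus A$ with largest $m$-value gives $m_c\le m_q$, so $W\le m_q$. Since each $a\in A$ is a non-winner, $\score(\tup U,a)<W\le m_q$, and therefore $\tup U$ itself satisfies the threshold condition of (2).

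The step I expect to require the most care to phrase cleanly is the chain $W\le m_c\le m_q$ in the second implication, since this is precisely what identifies $m_q$ as the correct threshold: it couples Lemma~\ref{lemma:max-score} (scores are capped by the $m_j$) with the extremal choice of $c_q$ over $C\setminus A$. Once that inequality is in hand, both directions are essentially one-liners, and the lemma delivers the intended reduction of ``some completion has no party winner'' to the purely score-bounded statement (2), which can then be decided by a matching argument.
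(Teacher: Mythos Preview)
Your proof is correct and matches the paper's argument essentially line for line: both directions use Lemma~\ref{lemma:max-score} for the chain $W\le m_c\le m_q$ and Lemma~\ref{lemma:A-max-score} to promote $c_q$ in the $(2)\Rightarrow(1)$ direction. Your $(1)\Rightarrow(2)$ is in fact slightly cleaner than the paper's, which invokes Lemma~\ref{lemma:A-max-score} once more at the end of that direction even though, as you observe, the original completion already witnesses~(2).
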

\begin{proof}
We prove each direction separately.

$1\Rightarrow 2:$\quad Let $\tup T$ be a completion of $\tup P$ such
that no candidate in $A$ is a winner. Let $c_e$ be a winner of $\tup
T$. Then $\score(\tup T,c_e)>\score(\tup T,a)$ for all $a\in
A$. According to Lemma~\ref{lemma:max-score} it holds that
$m_e\geq\score(\tup T,c_e)$, and hence, $m_q>\score(\tup T,a)$ for all
$a\in A$. So, we select as $\tup T$ the completion $\tup U'$ of
Lemma~\ref{lemma:A-max-score} for $\tup U=\tup T$.

$2\Rightarrow 1:$\quad Suppose that $\tup T$ is a completion of $\tup P$
such that $\score(\tup T,a)<m_q$ for all $a\in A$. Let $\tup T'$ be
obtained from $\tup T$ by applying Lemma~\ref{lemma:A-max-score}, where
$\tup T$ plays the role of $\tup U$ and $\tup T'$ the role of $\tup
U'$. Then in $\tup T'$ every party member of $A$ has a score strictly
lower than $c_q$, and therefore, none of the members of $A$ are
winners.
\end{proof}

From Lemma~\ref{lemma:less-than-mq} we conclude that we need to decide
on the existence of a completion $\tup T$ of $\tup P$ such that each
party member $a\in A$ has a score lower than $m_q$. This is done by
translating the problem into that of finding a maximum matching in a
bipartite graph, as follows.

Call a partial profile $P_i$ a \e{necessary supporter} of $A$ if
$\max(P_i)\subseteq A$. We construct the bipartite graph $G(U,V,E)$ as
follows.
\begin{itemize}
\item $U$ contains the node $i$ for every necessary supporter $P_i$ of
  $A$.
\item $V$ contains the nodes $\angs{a,1},\dots,\angs{a,m_q-1}$ for all
  party members $a\in A$.
\item $E$ connects $i\in U$ with $\angs{a,j}\in V$ whenever 
$a\in\max(P_i)$.
\end{itemize}

\begin{lemma}\label{lemma:matching}
The following are equivalent.
\begin{enumerate}
\item There is a completion $\tup T$ of $\tup P$ such that
  $\score(\tup T,a)<m_q$ for all $a\in A$.
\item $G$ has a matching of size $|U|$.
\end{enumerate}
\end{lemma}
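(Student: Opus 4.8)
The plan is to observe that, under plurality, the scores of the party members in any completion are governed entirely by how the \emph{forced} first-place points are shared out, and that distributing these forced points is precisely the capacitated assignment problem encoded by $G$ (each party member has capacity $m_q-1$, realized by its $m_q-1$ slot nodes). First I would record two structural facts. Under plurality, $\score(\tup T,c)$ equals the number of voters ranking $c$ first, and a voter $i$ can rank $c$ first in \emph{some} completion of $P_i$ exactly when $c\in\max(P_i)$, since any maximal element of a partial order heads some linear extension and the remaining order can be completed arbitrarily. Hence a necessary supporter $P_i$ (one with $\max(P_i)\subseteq A$) must give a point to some $a\in\max(P_i)\subseteq A$ in \emph{every} completion, while every other voter has a non-party candidate in $\max(P_i)$ and can be completed so as to give the party no point. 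So the party members' scores are driven solely by the allocation of the necessary supporters' forced points, and the condition $\score(\tup T,a)<m_q$ for all $a\in A$ amounts to each party member absorbing at most $m_q-1$ such points.

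For the direction $2\Rightarrow1$ I would take a matching $M$ of size $|U|$; since it saturates $U$, it assigns each necessary supporter $i$ to a distinct slot $\angs{a,j}$ with $a\in\max(P_i)$. I would then build a completion $\tup T$ by placing that $a$ first in $T_i$, placing a non-party maximal candidate first in every non-supporter, and completing each $P_i$ arbitrarily otherwise. Because the matched slots are distinct and there are only $m_q-1$ of them per party member, no $a\in A$ receives more than $m_q-1$ first-place votes, giving $\score(\tup T,a)<m_q$.

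For $1\Rightarrow2$ I would start from a completion $\tup T$ with $\score(\tup T,a)<m_q$ for all $a\in A$, and map each necessary supporter $i$ to the candidate it ranks first, which necessarily lies in $\max(P_i)\cap A$. Since the number of necessary supporters mapped to a fixed $a$ is at most $\score(\tup T,a)\leq m_q-1$, I can injectively place them into the $m_q-1$ slots $\angs{a,1},\dots,\angs{a,m_q-1}$; the resulting edges lie in $E$ by construction, so this assignment is a matching that saturates $U$.

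The hard part will not be the counting but the realizability checks that make the reduction sound: that a prescribed maximal element of each $P_i$ can always head an actual linear extension of $P_i$, and that the non-supporters can simultaneously be steered onto non-party candidates. Both follow from the elementary fact that any maximal element of a partial order heads some linear extension, but they must be stated explicitly, since without them the assignment read off from the matching might not correspond to a genuine completion of $\tup P$.
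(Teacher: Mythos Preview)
Your proposal is correct and follows essentially the same approach as the paper's proof: both directions use the obvious correspondence between first-place allocations of the necessary supporters and $U$-saturating matchings in $G$, with the slot nodes $\angs{a,1},\dots,\angs{a,m_q-1}$ enforcing the capacity $m_q-1$ on each $a\in A$. Your write-up is in fact more careful than the paper's, which leaves the realizability checks (that a chosen maximal element can head a linear extension, and that non-supporters can be steered onto $C\setminus A$) implicit; you make these explicit, which is an improvement rather than a departure.
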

\begin{proof}We prove each direction separately.

  $1\Rightarrow 2:$\quad Let $\tup T$ be a completion of $\tup P$ such
  that $\score(\tup U,a)<m_q$ for all $a\in A$. We construct a
  matching in $G$ by selecting for each $i$ a unique $\angs{a,j}$
  where $a$ is the top of $T_i$. Observe that we have enough
  $\angs{a,j}$ since $a$ got fewer top votes than $m_q$, as we are
  using the plurality rule.
  
  $2\Rightarrow 1:$\quad Suppose that $G$ has a matching of size
  $|U|$. We construct the desired $\tup T$ as follows. For each
  necessary supporter $P_i$, if $i$ is matched with $\angs{a,j}$
  then we select as $T_i$ a completion in which $a$ is at the top.
  For every other $P_i$ we select as $T_i$ a completion in which a
  candidate in $C\setminus A$ is at the top. Hence, each $a\in A$ gets
  the a score smaller than $m_q$ since there are only $m_q-1$ pairs
  $\angs{a,j}$.
\end{proof}
To conclude, our algorithm computes $m_q$, constructs the graph $G$,
and answers ``yes'' if and only if $G$ has \e{no} matching of size
$|U|$. The correctness is due to the combination of
Lemmas~\ref{lemma:less-than-mq} and~\ref{lemma:matching}.  This
completes the proof of Theorem~\ref{single-atom-thm}.

As a direct corollary of Theorem~\ref{single-atom-thm}, for the query
$q_1$ in Example~\ref{exam-2}, we have that $\necess{q_1}$ is in
polynomial time.  Another corollary, discussed next, generalizes
Theorem~\ref{single-atom-thm} through the notion of the \e{Gaifman
  graph} of a query, a notion that plays an important role in finite
model theory (see~\cite{DBLP:books/sp/Libkin04}); in this graph, the
nodes are the variables of the query and the edges consist of pairs of
variables occurring in the same ordinary atom.

The corollary applies to the case where every two distinct winner
variables belong to different connected components of the Gaifman
graph. In this case, we say that the $\Win$ atoms are \e{pairwise
  disconnected}.

\begin{corollary}
  \label{pairwise-disconnected} If $q$ is a Boolean conjunctive query
  with pairwise-disconnected $\Win$ atoms, then $\necess q$ is
  decidable in polynomial time.
\end{corollary}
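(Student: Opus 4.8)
The plan is to exploit the fact that the pairwise-disconnectedness assumption lets the whole query factor into variable-disjoint pieces, and that \emph{necessity} (unlike possibility) commutes with this factorization. First I would decompose $q$ according to the connected components $G_1,\dots,G_k$ of its Gaifman graph. Since the only variable occurring in any $\Win$ atom is its (single) candidate variable, the $\Win$ atoms contribute no edges, so the component structure is determined exactly as stated, and every atom of $q$ has all of its variables inside a single component (ground atoms, i.e.\ atoms with only constants, are set aside and evaluated directly). For each $j$ let $q^{(j)}$ be the existential closure of the conjunction of the atoms whose variables lie in $G_j$. Because the $G_j$ share no variables, on \emph{every} fixed database $\Dc$ the existential witnesses can be chosen independently in each component, so $q(\Dc)=\bigwedge_{j=1}^{k} q^{(j)}(\Dc)$.

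The crux of the argument is then a one-line logical observation: universal quantification distributes over conjunction. Writing necessity as a universal statement over completions, we get
\[
q \text{ necessary on } D \iff \forall\Dc:\ \bigwedge_j q^{(j)}(\Dc) \iff \bigwedge_j \bigl(\forall\Dc:\ q^{(j)}(\Dc)\bigr) \iff \bigwedge_j \bigl(q^{(j)}\text{ necessary on } D\bigr).
\]
Thus $\necess q$ reduces to deciding $\necess{q^{(j)}}$ for each component separately. This is precisely the step that the disconnectedness hypothesis buys us, and it is also where the contrast with $\possib{\cdot}$ lives: the existential quantifier does \emph{not} distribute over $\wedge$ when the pieces are connected through a shared variable (as the variable $i$ links the two $\Win$ atoms of $q_3$), which is why the connected case can become intractable.

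It then remains to handle each component. By the pairwise-disconnected assumption, each $G_j$ contains at most one winner variable, and since the rule and election are fixed constants, at most one distinct $\Win$ atom (duplicates collapse). Hence each $q^{(j)}$ is of one of two forms. If it contains a $\Win$ atom, it is a Boolean conjunctive query consisting of a single $\Win$ atom together with ordinary atoms, so $\necess{q^{(j)}}$ is decidable in polynomial time by Theorem~\ref{single-atom-thm} (computing the party $A$ of candidates for which the ordinary part is satisfiable on $D$, and testing necessary intersection); degenerate $\Win$ atoms with a constant candidate fall under the same theorem with $A$ a singleton. If $q^{(j)}$ contains no $\Win$ atom, it is an ordinary conjunctive query over the non-preference relations, whose value is identical in every completion (completions alter only preference relations); hence it is necessary iff it is true on $D$, decidable by standard conjunctive-query evaluation in polynomial time. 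Answering ``yes'' iff all components are necessary yields a polynomial-time algorithm, since $q$ is fixed and therefore has a bounded number of components.

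I expect the only genuinely delicate point to be justifying the factorization $q(\Dc)=\bigwedge_j q^{(j)}(\Dc)$ rigorously—checking that no $\Win$ atom secretly couples two components (it cannot, as each carries a single variable) and that the ground atoms are correctly absorbed. Everything after that, in particular the reduction to Theorem~\ref{single-atom-thm}, is routine.
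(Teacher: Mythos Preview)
Your proposal is correct and follows essentially the same approach as the paper: decompose $q$ along the connected components of the Gaifman graph, observe that necessity distributes over the resulting conjunction (the paper phrases this simply as ``the query factors out''), and then invoke Theorem~\ref{single-atom-thm} for components containing a $\Win$ atom while handling purely ordinary components by standard conjunctive-query evaluation. Your write-up is in fact more explicit than the paper's on why the factorization is sound and on the handling of duplicate and ground $\Win$ atoms, but the underlying argument is the same.
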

\begin{proof}
  When the winner variables of a Boolean conjunctive query $q$ are
  pairwise disconnected, the query ``factors out'' over the subqueries
  $q'$ that correspond to the connected components of the Gaifman
  graph. Consequently, $q$ is necessary on $D$ if and only if each
  $q'$ is necessary on $D$. Hence, it suffices to solve $\necess{q'}$
  for each such $q'$. Theorem~\ref{single-atom-thm} implies that
  $\necess{q'}$ can be solved in polynomial time for each such $q'$
  since, by assumption, each such $q'$ contains either:
\begin{itemize}
\item one $\Win$ atom, in which case it covered by
  Theorem~\ref{single-atom-thm};
\item several copies of a $\Win$ atom, in which we can ignore all of
  these atoms except for one copy; \e{or}
\item no $\Win$ atoms at all, in which case the evaluation is that of
  a conjunctive query over an ordinary database.
\end{itemize}
Hence, $\necess q$ is decidable in polynomial time.
\end{proof}

\subsubsection{Dichotomy for Two $\Win$ Atoms}
Next, we show that, for a natural class of conjunctive queries, the
necessity of queries exhibit a PTIME vs.\ coNP-complete dichotomy.

\def\specialclass{\mathcal{C}_{\mathsf{2W}}}

\begin{definition} \label{class-c-defn} {\rm Let $\specialclass$ be
    the class of all Boolean conjunctive queries $q$ with the
    following properties:

(i) There are two distinct $\Win$ atoms (and both involve the plurality rule and the same fixed election).

(ii) All other atoms of $q$ are ordinary atoms such that no ordinary
relation symbol occurs twice (i.e., the ordinary atoms form a
\e{self-join free} query).  }
\end{definition}

\begin{theorem} \label{dichotomy-thm}
Let $q$ be a query in the class $\specialclass$.
\begin{itemize}
\item If the $\Win$ atoms of $q$ are pairwise disconnected, then
  $\necess q$ is decidable in polynomial time.
\item Otherwise, $\necess q$ is coNP-complete.
\end{itemize}
\end{theorem}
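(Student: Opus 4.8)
The plan is to dispose of the upper bound and the tractable case quickly, and then to concentrate on coNP-hardness in the connected case, which is the substance of the theorem. For the upper bound, observe that for any fixed query whose voting rules have polynomial-time computable winners, a single completion $\Dc$ with $q(\Dc)$ false is a polynomial-size, polynomial-time-checkable certificate of non-necessity; hence $\necess q$ is in coNP for every $q\in\specialclass$. When the two $\Win$ atoms are pairwise disconnected, the query factors over the connected components of its Gaifman graph, so membership in polynomial time is immediate from Corollary~\ref{pairwise-disconnected} (equivalently, from Theorem~\ref{single-atom-thm}). It therefore remains to prove that $\necess q$ is coNP-hard whenever the two $\Win$ atoms lie in the same connected component.

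For hardness I would reduce from 3-SAT to the complement of $\necess q$, i.e., to the question of whether some completion of $D$ falsifies $q$. I would first establish hardness for the representative query $q_3$ of Example~\ref{exam-4} and then lift it to an arbitrary connected $q\in\specialclass$. The key device is a benchmark candidate $w$ whose plurality score is pinned, over every completion, to the global maximum value, so that $w$ is a winner in each completion. For every clause $C_j$ I introduce a candidate $p_j$ together with gadget voters whose partial orders are designed so that, in a completion, $p_j$ ties $w$ at the maximum score exactly when the truth assignment read off from that completion falsifies $C_j$; additional consistency voters penalize any completion that assigns conflicting values to the two literals of a variable, so that only completions encoding genuine assignments can avoid producing a spurious winner. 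I then populate \relname{Supports} and \relname{Opposes} with one issue per clause, $w$ supporting it and $p_j$ opposing it, so that the only conflicting pairs seen by $q_3$ are $w$ against each $p_j$. With this setup, a completion satisfies $q_3$ iff some $p_j$ is a winner iff some clause is falsified; hence some completion falsifies $q_3$ iff the formula is satisfiable, which gives coNP-hardness of $\necess{q_3}$.

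To obtain the theorem for every connected $q\in\specialclass$, I would route the conflict pattern used above along the path of ordinary atoms that links the two winner variables in the Gaifman graph. Because the ordinary part of $q$ is self-join free (Definition~\ref{class-c-defn}), I can assign fresh domain values to the path variables so that the ordinary atoms are jointly satisfied by exactly the pairs $(w,p_j)$ I intend, with no spurious homomorphisms, while the atoms off the connecting path are made trivially satisfiable by padding the corresponding relations. This reduces the evaluation of the ordinary part of $q$ to membership in the designed conflict relation, and the $q_3$ argument goes through verbatim.

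The main obstacle is the hardness core, and it is subtler than it first appears: if one treats each predicate ``$c$ is a winner'' as a free Boolean variable, the condition that the set of winners contains no conflicting pair is merely a conjunction of $2$-clauses $(\neg c\vee\neg d)$ and is solvable in polynomial time. The NP-hardness must therefore be injected through the plurality argmax itself, by forcing which candidates tie the benchmark $w$ to encode a satisfying assignment. Getting the clause gadgets and, especially, the variable-consistency gadgets to interact correctly with the global maximum---so that exactly the falsified-clause candidates, and no others, reach the top score---is the delicate part of the construction; by contrast, the routing and padding arguments that lift $q_3$ to a general connected $q$ are routine.
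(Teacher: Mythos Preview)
Your outline for the tractable case and the coNP upper bound is exactly what the paper does, and your lifting from a canonical hard query to an arbitrary connected $q\in\specialclass$ via the self-join-free path is the same idea as the paper's transfer argument (the paper substitutes the tuple $(a,b,b,\dots,b)$ into $\varphi$ for each edge $R(a,b)$ and uses self-join freeness to rule out spurious homomorphisms).

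The substantive divergence is in the hardness core. The paper does \emph{not} reduce from 3-SAT; it reduces from Maximum Independent Set to the canonical query
\[
q_h()\dla \Win(\val{plurality},\val{elec},c),\, R(c,d),\, \Win(\val{plurality},\val{elec},d).
\]
Given a graph $g$ and target $k$, the paper takes candidates $\langle u,i\rangle$ for each node $u$ and each $i\le k$, and $k$ voters where voter $v_i$ ranks all $\langle u,i\rangle$ above everything else but leaves them mutually incomparable. In any completion, the winners are precisely the $k$ top choices, one per voter, and $R$ encodes ``same node or adjacent nodes.'' Then $q_h$ is falsified by a completion iff those $k$ choices form an independent set. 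There is no benchmark candidate, no clause gadget, and no consistency enforcement: the plurality argmax is used only as ``pick one element from each of $k$ groups,'' which is exactly the combinatorial shape of Independent Set.

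Against this, your 3-SAT plan has a real gap. The part you flag as ``delicate'' is in fact the whole difficulty, and the proposal does not supply it. Concretely: in your sketch, the voters who decide whether $p_j$ receives a top vote are \emph{different} from the voters who encode the truth value of a variable, yet the adversary (seeking a completion that falsifies $q$) controls all of them independently. Nothing you have written prevents the adversary from simply never placing any $p_j$ first, regardless of the variable voters' choices, so no $p_j$ ever ties $w$. Your ``consistency voters'' are meant to close this hole by making a spurious candidate win whenever occurrence-level choices disagree, but with plurality each voter contributes exactly one top vote, and you have not shown how the score of a spurious candidate $s_x$ can be made to depend on choices made by \emph{other} voters. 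Designing such cross-voter coupling under plurality is nontrivial and is not a routine step; the paper's Independent Set reduction avoids it entirely because there a single voter's free choice simultaneously \emph{is} the variable and \emph{is} the winner. If you want to keep a satisfiability-style source, note that a binary conflict relation can only express pairwise constraints, so a graph problem like Independent Set (or, equivalently, 3-Colorability phrased as forbidden pairs) is the natural target, not 3-SAT with ternary clauses.
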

%\begin{proofhint}
\begin{omitframe}
  Tractability follows from Corollary~\ref{pairwise-disconnected}.  We
  reduce $q_h$ to each remaining query via the proof technique of a
  dichotomy by Kenig et
  al.~\cite{DBLP:conf/pods/KenigKPS17}. $\qedsymbol$
\end{omitframe}
%\end{proofhint}

In the remainder of this section, we prove
Theorem~\ref{dichotomy-thm}.
%\subsubsection{Hardness for Two $\Win$ Atoms}
We begin by proving coNP-hardness of $\necess{q_h}$ for a specific
Boolean conjunctive query  $q_h$:
\begin{align}
\label{eq:qh}
q_h() \dla & \Win(\val{plurality},\val{elec},c), R(c,d),\\
\notag           & \Win(\val{plurality},\val{elec},d)
\end{align}

\begin{lemma}\label{lemma:qh-hard}
  $\necess{q_h}$ is coNP-complete.
\end{lemma}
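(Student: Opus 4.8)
The plan is to prove both directions of coNP-completeness, starting with the easy upper bound and then reducing a canonical NP-complete problem to the complement of $\necess{q_h}$. For membership in coNP, observe that $\necess{q_h}$ \emph{fails} on $D$ precisely when there is a completion $\Dc$ on which $q_h$ is \emph{false}, i.e.\ a completion whose set of plurality winners contains no pair $(c,d)$ with $(c,d)\in R$. Such a completion can be guessed and verified in polynomial time, since plurality winners of a total profile are computable in polynomial time; hence the complement of $\necess{q_h}$ is in NP and $\necess{q_h}$ is in coNP. The reformulation I would use to drive the hardness proof exploits that, under plurality, only the top of each ballot matters: a completion is determined (score-wise) by a choice, for each voter $i$, of a top candidate from $\max(P_i)$, and a candidate's score is the number of voters selecting it. Thus the complement of $\necess{q_h}$ asks whether the voters can make top-choices so that the resulting set of maximum-score candidates is \emph{$R$-independent}, i.e.\ contains no pair from $R$.

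The key idea is to let $R$ consist only of \emph{self-loops}. Given a 3-CNF formula $\varphi$ with variables $x_1,\dots,x_n$ and clauses $C_1,\dots,C_m$, I would introduce one clause candidate $D_j$ per clause and set $R=\set{(D_j,D_j):1\le j\le m}$. Since the atom $R(c,d)$ of $q_h$ permits $c=d$, the query $q_h$ is \emph{true} on a completion iff some $D_j$ is a plurality winner; hence an $R$-independent winner set exists iff the voters can keep \emph{every} $D_j$ off the winner set. The construction encodes a truth assignment through a family of variable voters, each with $\max(P_i)=\set{T_i,F_i}$, whose free top-choice selects the truth value of $x_i$, while auxiliary voters feed score to $D_j$ so that $D_j$ attains the global maximum score exactly when clause $C_j$ is falsified. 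Padding candidates and voters pin the maximum score at a fixed threshold and guarantee that some non-$D$ candidate always meets it. Under this design, a completion avoids all $D_j$ among the winners iff the induced assignment satisfies every clause, so $\necess{q_h}$ is false on the constructed database iff $\varphi$ is satisfiable.

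The main obstacle is the score-calibration inside this gadget: I must choose the padding so that, in a completion induced by a \emph{satisfying} assignment, every $D_j$ stays \emph{strictly} below the maximum while some legitimate candidate meets it, and so that the instant any clause is falsified the corresponding $D_j$ is pushed up to a tie for the maximum. This forces me to balance plurality scores exactly, which is delicate precisely because only top positions contribute and the relevant ties must be engineered rather than assumed. Once the calibration is verified the reduction is polynomial and both directions are immediate, yielding coNP-hardness; together with the coNP upper bound this gives that $\necess{q_h}$ is coNP-complete. I would also remark that this hardness does not contradict the polynomial-time solvability of plurality possible winners in Theorem~\ref{class-thm}: the difficulty stems from having to control the ties among \emph{two} distinct winner candidates simultaneously (here, linking $c$ and $d$ through $R$), rather than from deciding a single possible winner.
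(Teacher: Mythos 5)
Your coNP upper bound is fine and agrees with the paper's. The hardness plan, however, has a fatal flaw, and it sits exactly in the step you defer as ``score calibration'': that calibration cannot exist unless P $=$ NP. Once you restrict $R$ to self-loops $\set{(D_j,D_j) : 1\le j\le m}$, the query $q_h$ is true on a completion if and only if the winner set intersects the party $A=\set{D_1,\dots,D_m}$, so on every database your reduction produces, $\necess{q_h}$ coincides with the \emph{necessary intersection} problem: is $A$ represented among the plurality winners in every completion? This is precisely the problem that the paper's proof of Theorem~\ref{single-atom-thm} solves in polynomial time, via Lemmas~\ref{lemma:A-max-score}--\ref{lemma:matching}, by reduction to bipartite matching. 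Hence, if your reduction from 3-CNF satisfiability were correct, composing it with that matching algorithm would decide satisfiability in polynomial time. The failure is structural, not a matter of delicate bookkeeping: under plurality, the completion can independently choose any maximal candidate as the top of each ballot, and the matching argument shows that the constraints ``keep every $D_j$ strictly below the maximum'' decompose into a feasibility condition with no mechanism to enforce consistency of a truth assignment across clauses sharing a variable. No choice of padding candidates or auxiliary voters can produce the behavior you postulate.

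The paper's proof avoids this trap by genuinely exploiting two \emph{distinct} winner variables: it reduces from the complement of maximum independent set. The candidates are pairs $\angs{u,i}$ for each node $u$ of the graph and each slot $i\in\set{1,\dots,k}$; voter $v_i$ ranks all candidates $\angs{u,i}$ above everything else but is indifferent among them, so every completion elects exactly one winner per slot; and $R$ consists of pairs of \emph{distinct} candidates (same node in different slots, or adjacent nodes). The winner set is then $R$-independent if and only if the $k$ chosen nodes are distinct and pairwise non-adjacent, i.e., form an independent set of size $k$. Your own closing remark---that the hardness must come from controlling ties between two distinct winners linked through $R$---is exactly right; but your construction, by taking $c=d$, gives up precisely that feature and lands in the tractable fragment of Theorem~\ref{single-atom-thm}.
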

\begin{proof}
  Membership in coNP is straightforward: as a witness of
  non-necessity, we can use an appropriate completion of the partial
  preferences.  To prove coNP-hardness, we show a reduction from the
  complement of the \e{maximum independent set} problem. We are given
  as input a graph $g$ and a number $k$, and the goal is to determine
  whether $g$ has an independent set of size $k$. We construct a
  database as follows. 
\begin{itemize}
\item There is a single election, and the candidates are al pairs of
  the form $\angs{u,i}$ where $u$ is a node of $g$ and $1\leq i\leq
  k$.
\item There are $k$ voters $v_1,\dots,v_k$. The voter $v_i$ prefers
  every candidate $\angs{u,i}$ to all candidates $\angs{u',i'}$ where
  $i'\neq i$. Voter $v_i$ has no preference between the
  $\angs{u,i}$.
\item $R$ consists of all pairs $(\angs{u,i},\angs{u',i'})$ such that
  one of the following holds:
\begin{itemize}
\item $u=u'$ and $i\neq i'$;
\item $u\neq u'$ but $u$ is a neighbor of $u'$.
\end{itemize}
\end{itemize}
This concludes the reduction, and we complete the proof by proving its
correctness.

We have the following. If $g$ has an independent set
$U=\set{u_1,\dots,u_k}$ of size $k$, then we construct a completion
$\Dc$ of $D$ such that $q_h(\Dc)$ is false. For that, the voter $v_i$
positions $\angs{u_i,i}$ first. Then, the winners are the
$\angs{u_i,i}$, and then $q_h(\Dc)$ is indeed false. Similarly, if
$q_h$ is false in a completion, then the set of winners in this
completion forms an independent set of size $k$.
\end{proof}

We can now prove Theorem~\ref{dichotomy-thm}.

\vskip1em\par\noindent\textit{Proof (of Theorem~\ref{dichotomy-thm}).}\,
Tractability follows from Corollary~\ref{pairwise-disconnected}. For
coNP-hardness, we use the coNP-hardness of $q_h$, due to
Lemma~\ref{lemma:qh-hard}, and reduce $q_h$ to each remaining query
via the proof technique of a dichotomy by Kenig et
al.~\cite{DBLP:conf/pods/KenigKPS17}. We do so as follows.

Let $q$ be a conjunctive query in $\specialclass$ such that the $\Win$
atoms of $q$ are connected.  We write $q$ as follows.
  \begin{align*}
q() \dla & \Win(\val{plurality},\val{elec},c)\,,\,\\ 
& \Win(\val{plurality},\val{elec},d)\,,\,\\
&  \varphi(c,d,x_1,\dots,x_k)
\end{align*}

Here, $\varphi(c,d,x_1,\dots,x_k)$ is a conjunction of distinct atomic
queries, all ordinary database relations. We construct a reduction
from $q_h$ to $q$. That is, given a database $D_h$ for $q_h$, we
construct a database $D$ for $q$, so that the two queries $q$ and
$q_h$ are either both necessary or none is.

  Let $D_h$ be given. To construct $D$, we use the exact same partial
  profile as $D_h$, and we construct the other relations of $\varphi$
  as follows. We begin with empty relations. For each fact $R(a,b)$ of
  $D_h$, we add to $D$ all ground facts of $\varphi(a,b,...,b)$.

  To prove correctness, we show that for every completion of the
  profile, the two queries behave the same. For that, we show two
  claims:
  \begin{enumerate}
  \item If $R(a,b)$ is in $D_h$, then $\varphi(a,b,...,b)$ is true in
    $D$.
  \item If $\varphi(a,b,b_1,...,b_k)$ is true in $D$ for some
    $b_1,\dots,b_k$, then $R(a,b)$ is in $D_h$.
\end{enumerate}
The first claim is true by construction, and it requires neither the
assumption that $\varphi$ connects the winners nor that $q$ has no
self joins.

For the second claim, assume that $\varphi(a,b,b_1,...,b_k)$ is true
in $D$ for some $b_1,\dots,b_k$. If $c$ and $d$ occur in the same
atom, then it follows immediately that $R(a,b)$ is in
$D_h$. Otherwise, the variable tuples of the atoms include three
types: ones that involve $c$ and $x_i$s, ones that involve only
$x_i$s, and ones that involve $x_i$s and $d$.  Then the tuples of the
second and the third type have the property that they give rise to
reflexive tuples in $D$, where values that correspond to different
variables are the same. By connectedness, we get the existence of an
atom of the first type, in which $c$ takes the value $a$ and $x_i$
takes value $b$. This can only happen if we had $R(c,d)$ in $D_h$.
This completes the proof of Theorem~\ref{dichotomy-thm}.  \qed

\vskip1em
%\end{proof}

As a direct application, $\necess{q_3}$ is coNP-complete, where $q_3$
is the query in Example~\ref{exam-4}. In contrast, $\necess{q_4}$ is
solvable in polynomial time, where $q_4$ is the following query.
\begin{align*}
q_4() \dla & \Win(\val{plurality},\val{elec},c),
\\
& 
\Win(\val{plurality},\val{elec},d), 
          \\
          & \relname{Supports}(c,\val{pro-choice}),
           \relname{Cand}(d,p,\val{BS},a).
\end{align*}

\subsection{Hardness Beyond the Plurality Rule}
\begin{omitframe}
Here, we show that for a large class of positional scoring rules,
there are conjunctive queries $q$ involving Winner atoms and ordinary
atoms such that $\necess q$ is coNP-complete.

\begin{definition} \label{event-const-defn} Let $r$ be a pure
  positional scoring rule. We say that $r$ is \emph{eventually
    constant} if there is a positive integer $k$ such that for every
  $m$ and for every $s$ with $k< s \leq m$, we have that $r(m,s)=0$.
\end{definition}

The plurality rule, the $k$-approval rule, for $k\geq 2$, and the rule
$r$ with scoring vectors of the form $(2,1,0,\dots, 0)$ (this rule
played an important role in the proof of the Classification Theorem
\ref{class-thm}
\cite{DBLP:journals/ipl/BaumeisterR12,DBLP:journals/jcss/BetzlerD10})
are eventually constant positional scoring rules, while the Borda rule
is not.

\begin{theorem} \label{event-const-thm} If $r$ is an eventually
  constant positional scoring rule, then there is a Boolean
  conjunctive query $q$ involving Winner atoms and ordinary atoms such
  that $\necess q$ is coNP-complete.
\end{theorem}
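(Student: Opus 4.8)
The plan is to reduce from the independent set problem (given a graph $g$ and an integer $\kappa$, decide whether $g$ has an independent set of size $\kappa$) exactly as in the proof of Lemma~\ref{lemma:qh-hard}, reusing the very same Boolean query but with the given rule $r$ in place of plurality:
\begin{align*}
q_h() \dla & \Win(r,\val{elec},c), R(c,d),\\
           & \Win(r,\val{elec},d).
\end{align*}
Membership in coNP is immediate: a completion on which $q_h$ is false is a polynomial-size witness of non-necessity, and for the fixed rule $r$ the set of winners of a completion is computable in polynomial time. So the whole content is coNP-hardness, i.e.\ building from $(g,\kappa)$ a partial preference database $D$ such that $q_h$ is \emph{not} necessary on $D$ if and only if $g$ has an independent set of size $\kappa$.

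The structural fact I would exploit is that, because $r$ is pure and eventually constant, its scoring vectors stabilize on their top $k$ entries: there is a constant $M_0$ (depending only on $r$) so that for every $m\geq M_0$ the length-$m$ scoring vector is $(a_1,\dots,a_k,0,\dots,0)$ for fixed values $a_1\geq\cdots\geq a_k$ with $a_1>0$. Let $j$ be the number of top positions carrying the maximal value, i.e.\ $a_1=\cdots=a_j>a_{j+1}$. The candidates of interest are the pairs $\angs{u,i}$ for a vertex $u$ of $g$ and $1\leq i\leq\kappa$ (the \emph{free} candidates), together with, for each $i$, a private block of $j-1$ \emph{anchor} candidates $A_{i,1},\dots,A_{i,j-1}$; I would also add $R$-isolated \emph{dummy} candidates, ranked below position $k$ by every voter, so that the total number $m$ of candidates reaches $M_0$ (these dummies score $0$, never win, and are not graph vertices, so they do not disturb the independent-set structure). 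One may assume $|V(g)|\geq k$, as otherwise $g$ has constant size and the instance is decided directly.

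The heart of the construction is the partial profile of the $\kappa$ voters, arranged so that in \emph{every} completion exactly one free candidate per voter wins. Voter $v_i$'s partial order places its anchors $A_{i,1}\succ\cdots\succ A_{i,j-1}$ at the very top, then all of its free candidates $\set{\angs{u,i}:u\in V(g)}$ (mutually incomparable), then everything else; thus the completion is free to choose which free candidate lands in position $j$. Eventual constancy is what makes this work: every candidate sits in the top $k$ of at most its own voter and scores $0$ in every other voter, so the maximum attainable score is $a_1$, attained exactly by the anchors (forced into positions $1,\dots,j-1$, each scoring $a_1$) and by the single free candidate placed in position $j$ (scoring $a_j=a_1$); free candidates in positions $j+1,\dots,k$ score $a_{j+1}<a_1$ and do not win. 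Hence the winners in any completion are all anchors together with one freely chosen \emph{pick} $\angs{u_i,i}$ per voter. I would then let $R$ contain exactly the pairs $(\angs{u,i},\angs{u',i'})$ with either $u=u'$ and $i\neq i'$, or $u\neq u'$ and $u$ adjacent to $u'$ in $g$, leaving anchors and dummies $R$-isolated.

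Correctness then mirrors the plurality case: since anchors and dummies lie in no $R$-pair, $q_h$ is false on a completion precisely when no two picks are $R$-related, i.e.\ when the chosen vertices $u_1,\dots,u_\kappa$ are pairwise distinct and pairwise non-adjacent---an independent set of size $\kappa$. Thus a falsifying completion exists iff $g$ has such a set, so $q_h$ is necessary iff $g$ has no independent set of size $\kappa$, giving coNP-hardness. The main obstacle, and the one genuinely new ingredient beyond Lemma~\ref{lemma:qh-hard}, is forcing a \emph{single} winning free candidate per voter when the top score is repeated ($j\geq 2$): a single partial order cannot push all but one of many symmetric candidates out of the scoring region, so I neutralize the extra top-scoring positions with the private anchor block and rely on eventual constancy to guarantee that nothing outside a voter's own top-$k$ block ever contributes score. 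The remaining technical points to verify are the stabilization of the top-$k$ scores for pure eventually constant rules and the fact that padding with dummy candidates (rather than dummy vertices) leaves the independent-set structure intact.
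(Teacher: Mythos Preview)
Your argument is correct, and the stabilization claim for pure eventually constant rules holds: taking $k$ minimal, the multiset of nonzero score values can only grow under purity and is bounded in size by $k$, so it freezes; once it does, only zeros are inserted (inserting a zero among the nonzero prefix would violate monotonicity), and the top $k$ entries are fixed from some $M_0$ on. After that, your localization argument (``each candidate lives in at most one voter's scoring window'') goes through cleanly, and the rest is the Lemma~\ref{lemma:qh-hard} reduction verbatim.

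Your route differs from the paper's in two respects. First, the source problem: the paper's hint reduces from the complement of \textsc{Positive $a$-in-$b$ Sat}, and the detailed argument that subsumes this theorem (Theorem~\ref{thm:all-hard}) reduces from 3-DNF Tautology; you reduce from Independent Set, exactly as in the plurality case. Second, the target query: the paper's general hardness uses the three-winner query $\qthr$ with a ternary $R$, together with a circular-shift voter construction that balances scores for \emph{every} strict rule; you stay with the two-winner query $q_h$ and a binary $R$. What your approach buys is economy and reuse: you get hardness for the same query $q_h$ as in Lemma~\ref{lemma:qh-hard}, with a construction that is a direct drop-in replacement for the plurality proof, the only new idea being the anchor block to absorb the repeated top score values. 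What the paper's approach buys is reach: the circular-shift construction does not need any candidate to score zero in foreign voters, so it works for Borda and every other strict rule, not just eventually constant ones. Your localization trick, by contrast, really does require eventual constancy---it is precisely what guarantees that a candidate outside a voter's top $k$ contributes nothing---so your proof does not extend beyond the hypothesis of the theorem.
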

\begin{proofhint}
By reduction from the complement of {\sc Positive $a$-in-$b$ Sat}, for suitable values of $a$ and $b$. This problem asks:
 given a CNF-formula consisting
  entirely of positive clauses of length $b$, is there a truth
  assignment such that  exactly $a$
  variables each clause are true?  If $b\geq 3$ and $1\leq a <b$, then
  {\sc Positive $a$-in-$b$ Sat}
   is NP-complete by Schaefer's
 dichotomy~\cite{Schaefer:1978:CSP:800133.804350}. $\qedsymbol$
\end{proofhint}

\begin{definition} \label{tail-const-defn} Let $k$ be a positive
  integer. The \emph{$k$-veto} rule (also known as the
  \emph{$(m-k)$-approval} rule) is the positional scoring rule with
  scoring vectors of the form $(1,\dots, 1, 0,\ldots, 0)$ with $(m-k)$
  scoring values of $1$ at the beginning and $k$ scoring values of $0$
  at the end.
\end{definition}
Clearly, the $1$-veto rule is the well known veto rule.

\begin{theorem} \label{tail-const-thm} For every $k\geq 1$, there is a
  Boolean conjunctive query $q$ involving Winner atoms of the $k$-veto
  rule and ordinary atoms such that $\necess q$ is coNP-complete.
\end{theorem}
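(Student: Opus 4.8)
The plan is to prove coNP-completeness for the very same query $q_h$ of \eqref{eq:qh}, now with its two $\Win$ atoms referring to the $k$-veto rule of Definition~\ref{tail-const-defn} instead of plurality, and to reduce from the complement of \textsc{Independent Set} in the spirit of Lemma~\ref{lemma:qh-hard}. Membership in coNP is immediate: a completion of the partial profile on which $q_h$ evaluates to \emph{false} is a polynomial-size witness of non-necessity. For hardness, given a graph $g$ with node set $N$ and a target size $t$, I will build a partial preference database $D$ over a single election with the property that some completion makes $q_h$ false if and only if $g$ has an independent set of size $t$; equivalently, $q_h$ is necessary on $D$ iff $g$ has no independent set of size $t$.

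The candidate set will consist of one node-candidate $x_u$ for each $u\in N$, together with $k-1$ \emph{private dummy} candidates for each of the $n:=|N|-t$ voters; the binary relation $R$ of $q_h$ is set to contain both ordered pairs $(x_u,x_v)$ and $(x_v,x_u)$ for every edge $\set{u,v}$ of $g$, so that the winners satisfy $q_h$ precisely when two of them are adjacent in $g$. Each voter's partial order is arranged in three tiers: at the top, all dummies belonging to \emph{other} voters; in the middle, all node-candidates, left mutually incomparable; and at the bottom, that voter's own $k-1$ dummies. Thus in every completion each voter vetoes (places in its bottom $k$) exactly its own $k-1$ dummies plus a single node-candidate of its choice, while the foreign dummies, sitting at the top, are never vetoed by that voter.

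The crucial bookkeeping is then the following. Each dummy is vetoed only by its owner, so its score is $n-1$, whereas a node-candidate that is never chosen attains the maximal score $n$; hence the winner set equals the set of node-candidates receiving zero vetoes, and since the $n$ voters cast $n=|N|-t$ node-vetoes in total, at least $t$ node-candidates survive with zero vetoes. If $g$ has an independent set $S$ of size $t$, letting the $n$ voters veto the $|N|-t$ node-candidates outside $S$ (one each) makes the winners exactly $S$, which is $R$-independent, so $q_h$ is false. Conversely, whenever $q_h$ is false in a completion, its winner set is an $R$-independent set of node-candidates of size at least $t$, hence an independent set of $g$ of size $t$. This yields the desired equivalence and, with it, coNP-hardness; combined with coNP membership it proves Theorem~\ref{tail-const-thm}.

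I expect the main obstacle to be the passage from the veto rule ($k=1$) to general $k$: for $k=1$ there are no dummies and the construction is simply ``each voter vetoes one node-candidate,'' with the argument essentially that of Lemma~\ref{lemma:qh-hard}; but for $k\geq 2$ one must absorb the extra $k-1$ vetoes per voter without creating spurious winners or coupling the voters together. The private-dummy gadget together with the top-tier placement of foreign dummies is what guarantees that dummies can never tie the clean node-candidates (score $n-1$ versus $n$) and that exactly one node-veto is cast per voter, and verifying these two invariants is the technical heart of the proof.
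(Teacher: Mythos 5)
Your proof is correct, and it takes a genuinely different route from the paper's. The paper treats the $k$-veto rule as a special case of more general machinery: in the abridged version this theorem is proved by a reduction from the complement of \textsc{Positive $a$-in-$b$ Sat} (NP-complete by Schaefer's dichotomy theorem), analogously to the eventually-constant case, and in the present version it is subsumed by Theorem~\ref{thm:all-hard}, which reduces 3-DNF Tautology to the necessity of a query $q_{\mathsf{3w}}^r$ with \emph{three} $\Win$ atoms and a ternary relation, via a construction (odd number of candidates, circular-shift voters, explicit score arithmetic) that works uniformly for every strict positional scoring rule. You instead stay with the two-atom query $q_h$ of~\eqref{eq:qh} and reduce from the complement of \textsc{Independent Set}, absorbing each voter's $k-1$ surplus vetoes with private bottom-tier dummies and using the top tier of foreign dummies to keep them out of every bottom-$k$ window. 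Your two invariants do hold: a voter's own dummies necessarily occupy its last $k-1$ positions, and a foreign dummy has at least $|N|+k-1\geq k$ candidates below it, so each voter vetoes exactly one node-candidate; then, with $n=|N|-t$ vetoes spread over $|N|$ node-candidates (assume WLOG $1\leq t<|N|$, the excluded instances being trivial), some node-candidate scores $n$ while every dummy is stuck at $n-1$, so the winners are exactly the zero-veto node-candidates, and the equivalence with independent sets of size $t$ follows as you argue. What your approach buys is a sharper and more elementary result: the very same binary-join pattern that governs the plurality dichotomy (Theorem~\ref{dichotomy-thm}) is already coNP-complete for every $k$-veto rule, with no appeal to Schaefer and no score bookkeeping beyond counting vetoes. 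What the paper's approach buys is uniformity: one construction covers \emph{all} strict positional scoring rules (Borda included), which your gadget cannot, since it exploits the feature, specific to $k$-veto, that only the bottom $k$ positions carry any information, so tiered partial orders pin every score exactly.
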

\begin{proofhint}
Similar to  the proof of Theorem~\ref{event-const-thm}. $\qedsymbol$
\end{proofhint}
\end{omitframe}

A positional scoring rule $r$ is said to be \e{strict} if always
allows to strictly prefer one candidate over another, or more
formally, it is the case that $r(m,1)>r(m,m)$ for all $m>1$.  In this
section, we show the existence of a Boolean conjunctive query $q$ such
that $\necess{q}$ is coNP-complete for \e{all} strict positional
scoring rules.\footnote{We note that this result subsumes the
  corresponding results, namely Theorems~4 and~5, that we have
  published in the abridged version of this
  paper~\cite{ijcai18paper}. Specifically, our results there are
  restricted to the classes of \e{eventually constant} and $k$-veto
  rules.}

\def\qthr{q_{\mathsf{3w}}^r}

If $r$ is a positional scoring rule, then we denote by $\qthr$ the
following query that uses $r$ as a constant.
\begin{align}
\label{eq:qthr}  \qthr() \dla & \Win(r,e,x_1),
  \Win(r,e,x_2), \\ \notag
  & \Win(r,e,x_3), R(x_1,x_2,x_3)
\end{align}

\begin{theorem}\label{thm:all-hard}
  $\necess{\qthr}$ is coNP-complete for \e{all} strict positional scoring rules $r$.
\end{theorem}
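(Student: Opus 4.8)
The plan is to establish membership in coNP first and then coNP-hardness via a reduction that is uniform across \emph{all} strict rules. Membership is immediate: for a fixed completion $\Dc$, the winner set $\W(r,\tup T(\Dc))$ is computable in polynomial time (since $r$ is polynomial-time computable), and then $\qthr(\Dc)$ is a fixed conjunctive query evaluated over a total database, hence also computable in polynomial time. So a completion $\Dc$ with $\qthr(\Dc)$ false is a polynomially checkable witness of non-necessity, which places $\necess{\qthr}$ in coNP.

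For the hardness direction I first record the reformulation that drives the reduction. Since $R$ is an ordinary relation (the same in every completion), $\qthr(\Dc)$ is false exactly when the winner set $W=\W(r,\tup T(\Dc))$ is \emph{$R$-free}, i.e.\ $R$ contains no triple $(w_1,w_2,w_3)$ with $w_1,w_2,w_3\in W$. Thus $\necess{\qthr}$ is the question ``does every completion yield a winner set that contains a full $R$-triple?'', and proving coNP-hardness amounts to proving that deciding the existence of a completion whose winner set is $R$-free is NP-hard. I would reduce from the existence version of a suitable NP-complete problem — the natural candidate being an independent-set/transversal problem, generalizing the reduction of Lemma~\ref{lemma:qh-hard} from the two-atom query $q_h$ — with the ternary $R$ now used both to forbid ``bad'' winner triples and, via repeated coordinates, to force distinctness among the three winner variables.

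The heart of the construction is a rule-adaptive \emph{selection gadget}: a partial profile over a candidate set $C$ together with a partition of a distinguished subset of $C$ into \emph{selector blocks}, such that (i) every completion picks one candidate from each block, (ii) the winner set of that completion is exactly the set $S$ of picked candidates, and (iii) conversely every transversal $S$ (one candidate per block) is realized by some completion. Given such a gadget I would let $R$ carry the combinatorial constraint of the source instance (as in Lemma~\ref{lemma:qh-hard}, where $R$ forbids picking two adjacent or two identical graph nodes), so that $S$ is $R$-free precisely when $S$ is a genuine solution; then a falsifying completion exists iff an $R$-free transversal exists iff the source instance is a yes-instance. For plurality the gadget is the trivial ``top of the block'' construction of Lemma~\ref{lemma:qh-hard}. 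For an arbitrary strict rule I know only $r(m,1)>r(m,m)$, so I would (a) fix the candidate count $m$ determined by the instance, compute the full scoring vector $(r(m,1),\dots,r(m,m))$ in polynomial time, and locate a descent position $p^{*}$ with $r(m,p^{*})>r(m,p^{*}+1)$ guaranteed by strictness, and (b) add \emph{balancing} voters whose rankings are arranged cyclically so that every candidate occupies every position equally often and hence receives an identical baseline score. The only surviving score differences then come from controlled moves of the selector candidates across the descent at positions $p^{*},p^{*}+1$, which I tune so that each picked candidate ends strictly above every non-picked one, forcing ``winners $=$ picked set'' uniformly over all strict rules.

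The hardest part will be exactly this uniform selection gadget. The difficulty is that a single database — hence a single candidate count $m$ and a single relation $R$ — must work for every strict rule at once, including rules with one large drop (plurality-like) and rules with all positions distinct (Borda), where I have no control over intermediate scores and only the one guaranteed descent to exploit. I expect the delicate points to be: making the balancing blocks cancel exactly so that baselines are equal (which constrains $m$ and the voter count and may force extra padding candidates into $C$); guaranteeing that these padding candidates can never reach the maximum score in any completion, so that they never pollute the winner set; and getting the tie structure exactly right so that the \emph{realizable} winner sets coincide with the transversals rather than merely containing or being contained in them. Once the gadget delivers ``winners $=$ picked set'' and the realizable sets match the source problem's solutions, the $R$-freeness reformulation yields the equivalence between non-necessity of $\qthr$ and a yes-instance, completing the coNP-hardness reduction and, in particular, covering strict rules such as Borda that fall outside the eventually-constant and $k$-veto cases.
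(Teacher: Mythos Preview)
Your plan is essentially the paper's proof: it too locates a descent position $d$ with $r(m,d)>r(m,d+1)$, builds size-two selector blocks $\{x_i,\neg x_i\}$ together with a single padding candidate $x_0$, and adds cyclically shifted balancing voters so that in every completion the winner set is exactly the set of selected literals (your ``winners $=$ picked set'' property). The paper's concrete source problem is 3-DNF Tautology, which is precisely your ``transversal avoiding all $R$-triples'' reformulation with $R$ listing the literal triples of the disjuncts; note that committing to size-two blocks is what makes the single-descent trick at positions $d,d{+}1$ work cleanly, so a direct generalization of Lemma~\ref{lemma:qh-hard} with large blocks would be the wrong instantiation of your scheme.
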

In the remainder of this section, we prove Theorem~\ref{thm:all-hard}.
We show a reduction from 3-DNF Tautology. An instance of this problem
is a 3-DNF formula $\varphi=\psi_1\lor\dots\lor \psi_k$ over a
collection $x_1,\dots,x_\ell$ of Boolean variables, where each
$\psi_i$ is a conjunction of three atomic formulas. The goal is to
determine whether $\varphi$ is true fo every truth assignment to
$x_1,\dots,x_\ell$. So, let $r$ be a strict scoring rule, and let such
$\varphi$ be given.  We construct an instance of $\qthr$ as follows.

  There are $2\ell+1$ candidates:
  \begin{itemize}
  \item $x_i$ for $i=1,\dots,\ell$;
  \item $\neg x_i$ for $i=1,\dots,\ell$;
  \item a special candidate $x_0$.
    \end{itemize}
    We denote by $C$ the set of all candidates; that is:
    \[C=\set{x_0,x_1,\neg x_1,x_2,\neg x_2,\dots,x_\ell,\neg
      x_\ell}\,.\] We denote by $m$ the number $|C|$ of candidates,
    that is, $m=2\ell+1$. It is crucial for the proof that $m$ is odd,
    and this is the reason for having the special candidate $x_0$.  We
    fix a natural number $d\in\set{1,\dots,m-1}$ with the property
    that
\[r(m,d)>r(m,d+1)\,.\]

We identify a linear order over $C$ with a vector $\tup
c=(c_1,\dots,c_m)$ that consists of all the candidates of $C$, where
$c_1$ is the most preferred and $c_m$ is the least preferred. For
$i=1,\dots,\ell$, let $\tup c^i$ be the vector (linear order) that is
constructed as follows. Start with a vector of $m$ empty
placeholders. Next, place $x_i$ at position $d$ and $\neg x_i$ at
position $d+1$. Next, place all remaining $m-2$ elements at the
remaining $m-2$ placeholders, in an arbitrary order. We denote $\tup
c^i$ as follows.
\[\tup c^i=(c_1,\dots,c_{d-1},x_i,\neg x_i,c_d,\dots,c_{m-2})\]

For $i=1,\dots,\ell$, we define $m-2$ voters as follows. 
\begin{itemize}
\item The voter $v_i$ has the preference $\tup c^i$, with the
  preference $x_i>\neg x_i$ removed. Hence, $x_i$ and $\neg x_i$ are
  incomparable, but inserting either the preference $x_i>\neg x_i$ or the preference
  $\neg x_i>x_i$ results in a linear order.
\item For an \e{odd} $j\in\set{1,\dots,m-3}$, the voter $v_i^j$ has
  the total preference order $\tup c^i$, with all candidates but $x_i$
  and $\neg x_i$ shifted $j$ positions to the left in a circular
  manner; that is:
\[(c_1',\dots,c_{d-1}',x_i,\neg x_i,c'_d,\dots,c'_{m-2})\]
where $(c'_1,\dots,c'_{m-2})=(c_{1+j},\dots,c_{m-2},c_1,\dots,c_{j})$.
\item For an \e{even} $j\in\set{1,\dots,m-3}$, the voter $v_i^j$ has
  the same preference as an odd $j$, except that $x_i$ and $\neg x_i$
  switch positions; that is:
\[(c_1',\dots,c_{d-1}',\neg x_i,x_i,c'_d,\dots,c'_{m-2})\]
where $(c'_1,\dots,c'_{m-2})=(c_{1+j},\dots,c_{m-2},c_1,\dots,c_{j})$.
\end{itemize}
Taken over all $i=1,\dots,\ell$, the above defines $\ell(m-2)$
voters.

Denote the sequence $(x_1,\neg x_1,x_2,\neg x_2,\dots,x_\ell,\neg
x_\ell)$ as $(c''_1,\dots,c''_{m-1})$.  We define additional $m-1$
voters $u_1,\dots,u_{m-1}$, where the preference for $u_j$ is the
linear order $(c''_1,\dots,c''_{m-1},x_0)$ with $c''_1,\dots,c''_{m-1}$
shifted $j$ positions to the left in a circular manner; that is, the
preference of $u_j$ is given by the following vector.
\[(c''_{j+1},\dots,c''_{m-1},c''_1,\dots,c''_{j},x_0)\] This completes
our definition of the voters. All in all, we have precisely
$\ell(m-2)+m-1$ voters; we denote this number by $n$

\definecolor{Gray}{gray}{0.85}
\newcolumntype{G}{>{\columncolor{Gray}}l}
\newcolumntype{F}{>{$>$\quad}l}
\def\OC#1{\multicolumn{1}{F}{#1}}
\def\xz{\mbox{\fbox{$x_0$}}}

\begin{table*}[t]
\renewcommand{\arraystretch}{1.4}
\centering
  \caption{Example of voters for the variables $x_1$, $x_2$ and $x_3$\label{table:example-all-score}
and $d=3$}
\begin{tabular}{c|cF>{$>$\quad}G>{$>$\quad}GFFF}\hline
Voter & \multicolumn{6}{c}{Partial order}\\\hline
$v_1^1$ & $x_2$ & $\neg x_2$ & $x_1$ & $\neg x_1$  & $x_3$ & $\neg x_3$ & $\xz$\\
$v_1^2$ & $\neg x_2$ & $x_3$ & $\neg x_1$ & $x_1$   & $\neg x_3$ & $\xz$  & $x_2$ \\
$v_1^3$  & $x_3$ & $\neg x_3$  & $x_1$ & $\neg x_1$   & $\xz$  & $x_2$ & $\neg x_2$ \\
$v_1^4$  & $\neg x_3$ & $\xz$ & $\neg x_1$ & $x_1$     & $x_2$ & $\neg x_2$ & $x_3$\\
$v_1$ & $\xz$ & $x_2$ & \multicolumn{2}{G}{$>$\quad$\{x_1,\neg x_1\}$} & $\neg x_2$ & $x_3$ & $\neg x_3$\\
\hline
$v_2^1$ & $x_1$ & $\neg x_1$ & $x_2$ & $\neg x_2$  & $x_3$ & $\neg x_3$ & $\xz$\\
$v_2^2$ & $\neg x_1$ & $x_3$ & $\neg x_2$ & $x_2$   & $\neg x_3$ & $\xz$  & $x_1$ \\
$v_2^3$  & $x_3$ & $\neg x_3$  & $x_2$ & $\neg x_2$   & $\xz$  & $x_1$ & $\neg x_1$ \\
$v_2^4$  & $\neg x_3$ & $\xz$ & $\neg x_2$ & $x_2$     & $x_1$ & $\neg x_1$ & $x_3$\\
$v_2$ & $\xz$ & $x_1$ & \multicolumn{2}{G}{$>$\quad$\{x_2,\neg x_2\}$} & $\neg x_1$ & $x_3$ & $\neg x_3$\\
\hline
$v_3^1$ & $x_1$ & $\neg x_1$ & $x_3$ & $\neg x_3$  & $x_2$ & $\neg x_2$ & $\xz$\\
$v_3^2$ & $\neg x_1$ & $x_2$ & $\neg x_3$ & $x_3$   & $\neg x_2$ & $\xz$  & $x_1$ \\
$v_3^3$  & $x_2$ & $\neg x_2$  & $x_3$ & $\neg x_3$   & $\xz$  & $x_1$ & $\neg x_1$ \\
$v_3^4$  & $\neg x_2$ & $\xz$ & $\neg x_3$ & $x_3$     & $x_1$ & $\neg x_1$ & $x_2$\\
$v_3$ & $\xz$ & $x_1$ & \multicolumn{2}{G}{$>$\quad$\{x_3,\neg x_3\}$} & $\neg x_1$ & $x_2$ & $\neg x_2$\\
\hline
$u_1$ & $x_1$ & $\neg x_1$ & \OC{$x_2$} & \OC{$\neg x_2$} & $x_3$ & $\neg x_3$ & $\xz$\\
$u_2$ & $\neg x_1$ & $x_2$ & \OC{$\neg x_2$} & \OC{$x_3$} & $\neg x_3$ & $x_1$ & $\xz$\\
$u_3$  & $x_2$ & $\neg x_2$ & \OC{$x_3$} & \OC{$\neg x_3$} & $x_1$ & $\neg x_1$ & $\xz$\\
$u_4$  & $\neg x_2$ & $x_3$ & \OC{$\neg x_3$} & \OC{$x_1$} & $\neg x_1$ & $x_2$ & $\xz$\\
$u_5$   & $x_3$ & $\neg x_3$ & \OC{$x_1$} & \OC{$\neg x_1$} & $x_2$ & $\neg x_2$& $\xz$\\
$u_6$   & $\neg x_3$ & $x_1$ & \OC{$\neg x_1$} & \OC{$x_2$} & $\neg x_2$ & $x_3$ & $\xz$\\
\hline
\end{tabular}
\end{table*}

For a completion $\tup T=(T_1,\ldots,T_n)$ of our partial profile, let
us say that $\tup T$ \e{selects} $x_i$ if the completion of $v_i$
prefers $x_i$ to $\neg x_i$, and $\neg x_i$ if the completion of $v_i$
prefers $\neg x_i$ to $x_i$. The key lemma of the proof is the
following.

\begin{lemma}\label{lemma:winners-assignment}
  For every completion $\tup T$, the winners are precisely the atomic
  formulas selected $\tup T$.
\end{lemma}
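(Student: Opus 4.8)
The plan is to prove Lemma~\ref{lemma:winners-assignment} by computing, for an arbitrary completion $\tup T$, the exact $r$-score of every candidate and showing that the maximum is attained precisely by the selected literals. First I would record the single structural observation that makes the computation tractable: among all $\ell(m-2)+(m-1)$ voters, every voter except the $v_i$ already carries a total order, so a completion $\tup T$ amounts to nothing more than a choice, for each $i$, of whether $v_i$ ranks $x_i$ above $\neg x_i$ (i.e.\ $\tup T$ selects $x_i$) or the reverse. Thus I only need to see how this single binary choice per variable propagates to the final scores.

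Next I would organize the voters into $\ell+1$ blocks---one block per variable $i$ (the $m-2$ voters $v_i,v_i^1,\dots,v_i^{m-3}$, identifying $v_i$ with ``shift $0$'') and the block of the $u_j$---and establish three balancing facts. (a) Within block $i$, the $m-2$ candidates \emph{other} than $x_i,\neg x_i$ realize all $m-2$ distinct circular shifts as $j$ ranges over $0,1,\dots,m-3$, so each of them lands exactly once in each position outside $\set{d,d+1}$ and therefore collects $S-r(m,d)-r(m,d+1)$ from block $i$, where $S=\sum_{s=1}^m r(m,s)$. (b) Within block $i$, the voters $v_i^j$ with $j\ge 1$ contribute \emph{equally} to $x_i$ and to $\neg x_i$, namely $\tfrac{m-3}{2}\bigl(r(m,d)+r(m,d+1)\bigr)$ each, because among $j\in\set{1,\dots,m-3}$ there are equally many odd and even values---this is exactly where the oddness of $m$ (hence evenness of $m-3$) is used, and it is the reason for introducing $x_0$. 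The remaining voter $v_i$ then breaks this tie in favor of the selected literal by precisely $r(m,d)-r(m,d+1)>0$. (c) In the $u_j$ block, $x_0$ is always last while the $m-1$ literals realize all $m-1$ circular shifts, so each literal collects $S-r(m,m)$ and $x_0$ collects $(m-1)\,r(m,m)$.

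Summing the three blocks and using the symmetry over $i$, every selected literal receives the same total $W=\tfrac{m-3}{2}A+r(m,d)+(\ell-1)(S-A)+(S-R)$, where $A=r(m,d)+r(m,d+1)$ and $R=r(m,m)$; every non-selected literal receives $W-\bigl(r(m,d)-r(m,d+1)\bigr)<W$; and $x_0$ receives $\ell(S-A)+(m-1)R$. A short calculation gives $W-\score(\tup T,x_0)=\tfrac{m-1}{2}\bigl(r(m,d)+r(m,d+1)\bigr)+r(m,d)-m\,r(m,m)$, which is at least $r(m,d)-r(m,m)>0$ since $r(m,d),r(m,d+1)\ge r(m,m)$ and $r(m,d)>r(m,d+1)\ge r(m,m)$. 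Hence $x_0$ and all non-selected literals lie strictly below $W$, so the winners are exactly the $\ell$ selected literals.

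The routine part is the bookkeeping, so the main obstacle is verifying the two balancing claims cleanly---the circular-shift argument of (a) for the ``other'' candidates and the odd/even cancellation of (b) for $x_i$ versus $\neg x_i$---and then confirming the genuinely delicate point that the special candidate $x_0$ stays \emph{strictly} below the winners. That last inequality is where strictness ($r(m,d)>r(m,d+1)$) together with $r(m,d+1)\ge r(m,m)$ is essential, and it explains why $x_0$ and the odd candidate count were engineered into the reduction in the first place.
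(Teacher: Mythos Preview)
Your proposal is correct and follows essentially the same route as the paper: you partition the voters into the $\ell$ blocks $\{v_i,v_i^1,\dots,v_i^{m-3}\}$ and the $u$-block, derive the same contributions ($\rho=S-A$ for ``other'' candidates in a block, $\tfrac{m-3}{2}A$ plus the tie-breaking $\sigma_a$ for $x_i/\neg x_i$, and $S-R$ versus $(m-1)R$ from the $u_j$), and conclude with the same difference $\sigma_a+\ell A-mR>0$ (your $\tfrac{m-1}{2}A+r(m,d)-mR$ is literally the paper's expression since $\tfrac{m-1}{2}=\ell$). The only cosmetic difference is that you package the argument with the abbreviations $S,A,R,W$ and state the circular-shift balancing facts (a)--(c) a bit more explicitly than the paper does.
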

\begin{proof}
  Let $\tup T$ be a completion, let $p\in\set{1,\dots,\ell}$, and let
  $a\in\set{x_p,\neg x_p}$ be an atomic formula. Let us compute the
  score $s(\tup T,a)$ that $\tup T$ assignes to $a$.
\begin{itemize}
\item For $i\neq p$, the voters $v_i$ and $v_i^1,\dots,v_i^{m-3}$
  jointly contribute to the score of $a$ the following portion that we
  denote by $\rho$.
\begin{equation}\label{eq:v-i-contrib}
\rho=\left(\sum_{i=1}^{m} r(m,i)\right) -r(m,d)-r(m,d+1)
\end{equation}
This is true because in the orders defined by $v_i^1,\dots,v_i^{m-3}$,
the atom $a$ appears precisely once in each position, except for the
positions $d$ and $d+1$.
\item The voter $v_p$ contributes to $a$ the following portion that we
  determine by $\sigma_a$.
\begin{equation}\label{eq:v-p-contrib}
\sigma_a=
\begin{cases}
r(m,d) & \mbox{if $\tup T$ selects $a$;}\\
r(m,d+1) & \mbox{otherwise.}\\
\end{cases}
\end{equation}
In particular, $\sigma_a>\sigma_{a'}$ if $a$ is selected and $a'$ is
not.
\item The voters $v_p^1,\dots,v_p^{m-3}$ jointly contribute to 
  the score $a$ the following portion.
\begin{align}\notag
&\frac{m-3}{2}\cdot r(m,d)+\frac{m-3}{2}\cdot r(m,d+1)=\\
&(\ell-1)(r(m,d)+r(m,d+1))\label{eq:v-p-j}
\end{align}
Here, we are using the fact that $m$ is odd, which is again due to the addition of
$x_0$.
\item The voters $u_1,\dots,u_{m-1}$ jointly contribute to the score
  of $a$ the following portion.
\begin{equation}\label{eq:u-j}
\sum_{i=1}^{m-1}r(m,i)
\end{equation}
\end{itemize} 
All in all, the score $s(\tup T,a)$ that $a$ gains in $\tup T$ sums up
to 
\begin{align}
\notag
s(\tup T,a)=&(\ell-1)\rho+\sigma_a+(\ell-1)(r(m,d)+r(m,d+1))\\
\label{eq:a}
&+\sum_{i=1}^{m-1}r(m,i)\,. 
\end{align}
 In particular, we conclude that the
score is determined only by whether $a$ is selected or not, and every
selected candidate has a higher score than every non-selected
candidate. 

To complete the proof, we need to show $x_0$ does not have a score
higher than the selected $a$. We will show that score $s(\tup T,x_0)$
is lower than $s(\tup T,a)$ for \e{every} atomic formula $a$, selected
or not. First, let us compute $s(\tup T,x_0)$. For $i=1,\dots,\ell$,
the voters $v_i$ and $v_i^1,\dots,v_i^\ell$ jointly contribute to
$x_0$ the portion $\rho$, and so, over all $i$ we get
$\ell\cdot\rho$. The voters $u_1,\dots,u_\ell$ jointly contribute to
the score of $x_0$ the portion $(m-1)\cdot r(m,m)$. Hence, we have the
following.
\begin{equation}\label{eq:x0}
s(\tup T,x_0)=\ell\cdot\rho+(m-1)\cdot r(m,m)
\end{equation}
Hence, combining~\eqref{eq:a} and~\eqref{eq:x0} we get the following
for all atomic formulas $a$.

\begin{align*}
&s(\tup T,a)-s(\tup T,x_0)=-\rho+\sigma_a+(\ell-1)(r(m,d)
\\&
\quad +r(m,d+1))+\left(\sum_{i=1}^{m-1}r(m,i)\right)-(m-1)\cdot r(m,m)
\\&
= -\left(\sum_{i=1}^{m} r(m,i)\right) +r(m,d)+r(m,d+1)+\sigma_a
\\&
\quad+(\ell-1)(r(m,d)+r(m,d+1))
\\&
\quad\,+\left(\sum_{i=1}^{m-1}r(m,i)\right)-(m-1)\cdot r(m,m)
\\&=-r(m,m)+\sigma_a+\ell(r(m,d)+r(m,d+1))
\\&
\quad
-(m-1)\cdot r(m,m)
\\&
= \sigma_a+\ell(r(m,d)+r(m,d+1))-m\cdot r(m,m)
\\&
> r(m,d+1)+\ell(r(m,d+1)\\&
\quad +r(m,d+1))-m\cdot r(m,m)
\\&
=m\cdot r(m,d+1)-m\cdot r(m,m)\geq 0
%>-\left(\sum_{i=1}^{d-2} r(m,i)\right) -\left(\sum_{i=d+2}^{m} r(m,i)\right)
%+\ell\cdot r(m,d)+\ell\cdot
%r(m,d+1)\\
%&\quad\,\,+\left(\sum_{i=1}^{m-1}r(m,i)\right)-(m-1)\cdot r(m,m)
%\\
%&> \ell\cdot r(m,d)+\ell\cdot r(m,d+1)-(m-1)\cdot r(m,m)
%\\&
%> 2\ell\cdot r(m,d+1)-(m-1)\cdot r(m,m)
%\\&
%= (m-1)\cdot r(m,d+1)-(m-1)\cdot r(m,m)\geq 0
\end{align*}
Hence, $s(\tup T,a)-s(\tup T,x_0)>0$, as claimed. This completes the proof of the lemma.
\end{proof}

We complete the proof by defining the ternary relation $R$.  Recall
that $\varphi=\psi_1\lor\dots\lor \psi_k$. For $j=1,\dots,k$, let
$\psi_j=a_j^1\lor a_j^2\lor a_j^3$; we insert into $R$ the triple
$(a_j^1,a_j^2,a_j^3)$. In particular, $R$ contains $k$ tuples.  Based
on Lemma~\ref{lemma:winners-assignment}, the query $\qthr$ is true in
a completion if and only if the winners (i.e., the selected atomic
formulas) represent a satisfying assignment. Hence, the query $\qthr$
is necessary if and only if $\varphi$ is a tautology. This completes
the proof of Theorem~\ref{thm:all-hard}.

\section{Concluding Remarks}
\label{sec:conc}

We presented a framework that enriches social choice with relational
database context.  This framework supports the formulation of queries
about winners in elections, alongside contextual information about
candidates, voters, and positions on issues. In the presence of
incomplete voter preferences, the semantics of queries are given via
the notions of necessary and possible answers, which extend the
notions of necessary and possible winners. Our technical results about
the necessary answers of conjunctive queries reveal that the context
makes a substantial difference, since the complexity of the necessary
answers of queries may be higher than the complexity of the necessary
winners.

%It remains an open problem to classify the complexity of the necessary
%answers for all positional scoring rules and conjunctive queries.  
It remains open to determine the complexity of the possible answers
for the plurality rule and the veto rule (for all other positional
scoring rules, even computing the possible winners is an intractable
problem).  It is interesting to go beyond conjunctive queries and,
among others, consider queries that support aggregate operators, such
as count and average.
 
 The alternative modeling of preferences is another direction for
 future research. Probabilistic votes adopt statistical models of
 preferences, such as Mallows~\cite{Mallows1957} and the Repeated
 Insertion Model (RIM)~\cite{Doignon2004}.  The analog of computing
 necessary/possible winners is to compute the probability that a given
 candidate
 wins~\cite{DBLP:conf/aaai/BachrachBF10,DBLP:conf/aldt/LuB11}.  In our
 framework, the analog is \e{probabilistic query answering}, where the
 goal is to compute the marginal probability of possible query
 answers~\cite{DBLP:conf/vldb/DalviS04,DBLP:series/synthesis/2011Suciu}.
 Conjunctive query evaluation over RIM databases has been studied
 in~\cite{DBLP:conf/pods/KenigKPS17}, but without the angle of
 computational social choice.

The modeling of voter preferences may also incorporate \e{constraints}
(or \e{dependencies}) that restrict the possible completions to those
satisfying some conditions that are known to hold. An example is that
voters vote according to party affiliation---all candidates of one
party are preferred to all candidates of another party (but we do not
know upfront which party comes first).

\balance

\bibliographystyle{plain}
\bibliography{dbcomsoc}

\end{document}